\DeclareMathOperator{\Laplace}{\mathcal{L}}
\DeclareMathOperator{\vpa}{vpa}
\DeclareMathOperator{\double}{double}
\DeclareMathOperator{\find}{find}
\DeclareMathOperator{\abs}{abs}
\newenvironment{breakablealgorithm}
  {
   \begin{center}
     \refstepcounter{algorithm}
     \hrule height.8pt depth0pt \kern2pt
     \renewcommand{\caption}[2][\relax]{
       {\raggedright\textbf{\ALG@name~\thealgorithm} ##2\par}%
       \ifx\relax##1\relax 
         \addcontentsline{loa}{algorithm}{\protect\numberline{\thealgorithm}##2}%
       \else 
         \addcontentsline{loa}{algorithm}{\protect\numberline{\thealgorithm}##1}%
       \fi
       \kern2pt\hrule\kern2pt
     }
  }{
     \kern2pt\hrule\relax
   \end{center}
  }
\newtheorem{theo}{Theorem}[section]
\newtheorem{lem}{Lemma}[section]
\newtheorem{cor}{Corollary}[section]
\title{First Order Plus Fractional Diffusive Delay Modeling: interconnected discrete systems}
\author{
	Jasper Juchem \\
	Department of Electromechanical, Systems and Metal Engineering\\
	Ghent University\\
	9052 Zwijnaarde, Belgium \\
	\texttt{Jasper.Juchem@UGent.be}\\
\And
	Amélie Chevalier \\
	Department of Electromechanical, Systems and Metal Engineering\\
	Ghent University\\
	9052 Zwijnaarde, Belgium \\
	\texttt{Amelie.Chevalier@UGent.be}\\
\And
	Kevin Dekemele \\
	Department of Electromechanical, Systems and Metal Engineering\\
	Ghent University\\
	9052 Zwijnaarde, Belgium \\
	\texttt{Kevin.Dekemele@UGent.be}\\
\And
	Mia Loccufier \\
	Department of Electromechanical, Systems and Metal Engineering\\
	Ghent University\\
	9052 Zwijnaarde, Belgium \\
	\texttt{Mia.Loccufier@UGent.be}\\
}
\begin{document}
	
\maketitle

\begin{abstract}
This paper presents a novel First Order Plus Fractional Diffusive Delay (FOPFDD) model, capable of modeling delay dominant systems with high accuracy. The novelty of the FOPFDD is the Fractional Diffusive Delay (FDD) term, an exponential delay of non-integer order $\alpha$, i.e. $e^{-(Ls)^{\alpha}}$ in Laplace domain. The special cases of $\alpha = 0.5$ and $\alpha = 1$ have already been investigated thoroughly. In this work $\alpha$ is generalized to any real number in the interval $]0,1[$. For $\alpha=0.5$, this term appears in the solution of distributed diffusion systems, which will serve as a source of inspiration for this work. Both frequency and time domain are investigated. However, regarding the latter, no closed-form expression of the inverse Laplace transform of the FDD can be found for all $\alpha$, so numerical tools are used to obtain an impulse response of the FDD. To establish the algorithm, several properties of the FDD term have been proven: firstly, existence of the term, secondly, invariance of the time integral of the impulse response, and thirdly, dependency of the impulse response's energy on $\alpha$. To conclude, the FOPFDD model is fitted to several delay-dominant, diffusive-like resistors-capacitors (RC) circuits to show the increased modeling accuracy compared to other state-of-the-art models found in literature. The FOPFDD model outperforms the other approximation models in accuratly tracking frequency response functions as well as in mimicing the peculiar delay/diffusive-like time responses, coming from the interconnection of a large number of discrete subsystems. The fractional character of the FOPFDD makes it an ideal candidate for an approximate model to these large and complex systems with only a few parameters.
 \medskip

{\it MSC 2010\/}: Primary 93B30; Secondary 26A33, 93A15, 93B11


\end{abstract}

\keywords{Fractional Diffusive Delay, Delay dominant systems, Distributed diffusion systems, Model fitting}

\section{Introduction}\label{sec:Intro}
\setcounter{section}{1}
\setcounter{equation}{0}

	Modeling of complex and interconnected processes has been a focus of research for many years \cite{Swaroop1996,Lu2019}. Obtaining a system's model provides insight into the dynamical behavior and stability of a process and allows for the design of controllers. Finding highly accurate models of complex systems is often time-consuming, expensive and superfluous.
	Therefore, literature reports on an approximation of higher order systems with First Order Plus Dead Time (FOPDT) systems \cite{Sudaresan}. The FOPDT model is simple and easy to fit. However, the low complexity of this approximation often fails to model the complete dynamical behavior in both time and frequency domain. A possible solution is higher order fitting, such as Second Order Plus Dead Time (SOPDT) models. However, this often only yields small improvements over an increased complexity \cite{Chang, Balaguer}. 

	Fractional order calculus has been used to improve model fitting in many fields of engineering such as bio-engineering \cite{Magin}, control engineering \cite{Monje2010,Caponetto}, and electrical engineering \cite{Sierociuk} to name a few. These new models rely on the theory of integration and differentiation of an arbitrary real order which is not necessarily integer \cite{Podlubny1999,Oldham}. The fractional order, denoted by $\alpha$, can be any real or complex number. The integer order case, where $\alpha\in\mathbb{N}$, is a particular case of the more general fractional order calculus.

	Recently, research is focused on integrating fractional order calculus into the low complexity FOPDT model to obtain better model fits, yielding to the Fractional Order First Order Plus Dead Time (FO$^2$PDT) model \cite{Narang2011}. Other enhancements to the FOPDT model are reported in literature \cite{Wang,Kaya}. Srinivasan and Chidambaram used the Laplace transform approach and the modified relay feedback method to improve the FOPDT method by introducing an extra modeling parameter \cite{Srinivasan}.

	The current research has been inspired by two observations: i) the solution of diffusion equations, a class of partial differential equations (PDE), contains a fractional order exponential term $e^{-(Ls)^{0.5}}$ in Laplace domain; and ii) in \cite{Sierociuk2015}, thermal diffusion processes have been modeled as lumped-parameter resistors-capacitors (RC) networks. This work combines both observations into the following hypothesis: ``Using the generalized term $e^{-(Ls)^{\alpha}}$, delay dominant systems, which can be expressed as a finite series of interconnected discrete subsystems, are modeled with an increased accuracy." This fractional exponential term will be referred to as the Fractional Diffusive Delay (FDD) term. Examples of these delay-dominant systems, also known as process reaction curves, are electrical circuits, smart grids \cite{Zou,Kanchev} and municipal water systems \cite{Ocampo}. 
	
	This new FDD term opens the door to a new fractional approximation model: the First Order Plus Fractional Diffusive Delay (FOPFDD) model. An exploratory research \cite{Juchem2019} first investigated the combination of FDD and a first-order model in frequency domain. The findings of Juchem et al. (2019) were later used in \cite{Muresan2020} to examine stability margins related to closed-loop behavior. The time domain aspect and the relation to time delay and diffusion are not explored in these preliminary works.
	
	The current work gives more theoretical background on FDD, including an extensive discussion in time domain, which is never done before. The time domain response, following from the inverse Laplace transform of $e^{-(Ls)^{\alpha}}$, allows us to better understand the effect of the FDD's parameters. Preliminary results regarding the link with diffusion and delay are given as well. Furthermore, the performance of the new model is compared to the other models mentioned before for an RC circuit of variable size.

	This paper is structured as follows: the next section provides a link between the FDD and the diffusive partial differential equation both in frequency and time domain. Numerical simulation issues are addressed as well. The third section presents the FOPFDD modeling method. The fourth section presents the performance of the approximation model while the fifth section provides a discussion. The final section gives a main conclusion and future work. 

\section{Origins \& Theoretical Concepts}
\setcounter{section}{2}
\setcounter{equation}{0}

	In this section, the heat equation is discussed as a special case of diffusion equations where the solution of this PDE gives rise to the term $e^{-Ls^{0.5}}$ in the Laplace domain. The generalized form of this term $e^{-(Ls)^{\alpha}}$, the Fractional Diffusive Delay (FDD),  is discussed from a theoretical point-of-view. First, the straightforward frequency domain behavior of the FDD is described. Then, some remarks regarding time domain and some interesting properties are proposed. An inverse transformation of FDD from Laplace domain to time domain and its numerical implementation are not trivial. Therefore, the final part addresses an algorithm to obtain the impulse response of the FDD using numerical tools. 

\subsection{Heat diffusion equation: a source of inspiration}
	A fractional exponential term has already appeared in differential equation theory. The heat diffusion equation is a well-known PDE that represents the heat distribution in a material in time and location:
	\begin{equation}
		u_{t}=\kappa\nabla^{2}u
	\end{equation}
	with $\kappa$ the thermal diffusivity and $u_{t}$ the time derivative of $u(x,t)$. In the case of a semi-infinite, uniform and 1-dimensional rod, $u(x,t)$ is the temperature profile along the $x$-axis and at time $t$, which leads to:
	\begin{equation}\label{eq:1DheatPDE}
	\frac{\partial u(x,t)}{\partial t} = \kappa \frac{\partial^{2}u(x,t)}{\partial x^{2}}
	\end{equation}
	This PDE is accompanied with boundary conditions. With these, the transfer function of this distributed parameter system can be obtained as shown in \cite{Curtain2009}.

	In \cite{Sierociuk2015}, Sierociuk et al. make use of a large number of connected resistors-capacitors (RC) networks as an electro-analogon of this thermal diffusion process. A semi-infinite, thermally conductive rod is modelled as an infinite amount of interconnected RC networks by subdividing the rod in infinitesimally small sections and modeling each of section as a RC network. 
	
	\begin{figure}[t]
		\begin{center}
			\begin{circuitikz}
				\draw (0,0)
				to[R=$R$,i>^=$i(x{,}t)$] (3,0)
				to [C, l_=$C$] (3,-2.5);
				\draw (3,-2.5)
				to[short] (0,-2.5)
				(0,-2.5) to [european voltages,open,v^=$u(x{,}t)$]    (0,0);
				\draw (3.2, -2.5) to [european voltages,open,v=$u(x+dx{,}t) $] (3.2,0);
				\draw (3,0) to (4,0);
				\draw (3.8,0) node[label={above:$i(x+dx{,}t)$}] {};
				\draw (4,0)
				to[R=$R$,i>^=$ $] (7,0) 
				to[C, l_=$C$] (7,-2.5)
				to (3, -2.5);
				\draw[dashed] (7,0) to (8,0);
				\draw[dashed] (7,-2.5) to (8,-2.5);
			\end{circuitikz}
			\caption{Electro-analog model for heat diffusion in an infinitesimally small section of a thermally conductive rod.}
			\label{fig:transmission-line}
		\end{center}
	\end{figure}
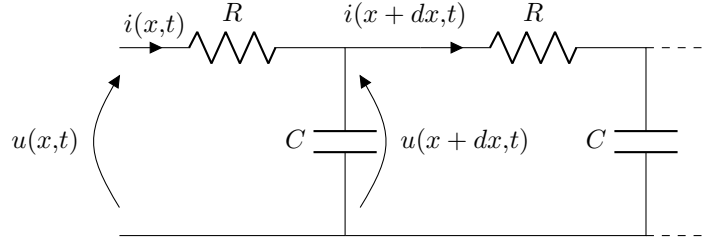

	In Figure \ref{fig:transmission-line}, an infinitesimally small section is depicted as an RC circuit. In this rod section, the continuous material is lumped into two identical subsystems, as shown in Figure \ref{fig:transmission-line}. Using Ohm's law:
	\begin{numcases}{}
		u(x,t) - u(x+dx,t) = Ri(x,t)\\
		i(x,t) - i(x+dx,t) = C\frac{\partial u(x,t)}{\partial t}
	\end{numcases}
	Now, for $dx \rightarrow 0$ this can be rewritten
	\begin{numcases}{}
		\frac{\partial u(x,t)}{\partial x} = Ri(x,t)\label{eq:OhmR}\\
		\frac{\partial i(x,t)}{\partial x} = C\frac{\partial u(x,t)}{\partial t}\label{eq:OhmC}
	\end{numcases}
	which by combining \eqref{eq:OhmR} and \eqref{eq:OhmC} yields an equivalent expression as \eqref{eq:1DheatPDE} with $\kappa = (RC)^{-1}$:
	\begin{numcases}{}
	\frac{\partial^{2}u(x,t)}{\partial x^2} = RC\frac{\partial u(x,t)}{\partial t} & \\
	u(x,0) = 0 & \nonumber\\
	u(0,t) = f(t) & \nonumber
	\end{numcases}
	Given the boundary conditions, the solution of $U(x,s)$ in the Laplace domain is:
	\begin{equation}
		U(x,s) = F(s)\exp\left(-\sqrt{RCs}\cdot x\right)
	\end{equation}
	where $F(s)=\Laplace[f(t)](s)$. So, at an arbitrarily chosen position $x$ the temperature profile in frequency domain is related to the term $\exp(-(Ls)^{\alpha})$ with $\alpha = 0.5$. Notice that this solution, with $\alpha=0.5$, applies to an infinite number of interconnected RC networks.

	Lately, there is a trend to generalize physical laws and their respective PDE's using non-integer order differentials. In the past, these generalizations have been proposed to describe the propagation of plane electromagnetic waves in isotropic and homogeneous, lossy dielectrics, the Maxwell equations, and even Newton's second law \cite{Podlubny1999}. In this line of thought, the authors propose a generalized, fractional order heat diffusion PDE:
	\begin{numcases}{}
		\frac{\partial^{2} u(x,t)}{\partial x^{2}} = L^{2\alpha}\frac{\partial^{2\alpha}u(x,t)}{\partial t^{2\alpha}}, \quad \alpha \in ]0,1[ & \label{eq:generalPDE}\\
		u(x,0) = 0 & \label{eq:generalBC1}\\
		\lim_{s\to \infty}{U(x,s)} = 0 & \label{eq:generalBC2}\\
		U(0,s) = F(s) & \label{eq:generalBC3}
	\end{numcases}
	with the definition of $]a, b[\ = \{x\in\mathbb{R}|a < x < b; a, b\in\mathbb{R}\}$ and with the fractional derivative of order $\mu > 0$ in the Caputo sense:
	\begin{equation}\label{eq:CaputoDerivative}
		\frac{d^{\mu}}{dt^{\mu}} f(t) =
		\begin{cases}
			\frac{1}{\Gamma(m - \mu)}\int_{0}^{t}\frac{f^{(m)}(\tau)d\tau}{(t - \tau)^{\mu + 1 - m}} & m - 1 < \mu < m\\
			\frac{d^{m}}{dt^{m}}f(t) & \mu = m
		\end{cases}
	\end{equation}
	with $\Gamma(z) = \int_{0}^{\infty}t^{z-1}e^{-t}dt$, the gamma-function.
	
	Boundary condition \eqref{eq:generalBC3} is the Laplace transform of the input signal (i.e. temperature or voltage for thermal or electric PDE's respectively) in time at location $x=0$. A step and impulse input signal is given by $F(s) = \frac{1}{s}$ and $F(s) = 1$ respectively. Remark that the PDE in \eqref{eq:generalPDE} resembles the fractional diffusion-wave equation in \cite{Gorenflo1999} ($\mathcal{D} = 1/L^{2\alpha}$, and $\alpha$ is scaled with a scalar) and the time-fractional diffusion equation in \cite{Mainardi2010} ($K_{\beta} = 1/L^{2\alpha}$, and $\beta$ is equal to $2\alpha$). 
	
	The transfer function of the proposed PDE can be found by taking the Laplace transform of \eqref{eq:generalPDE}:
	\begin{equation}
		s^{2\alpha}U(x,s) - u(x,0) = \frac{1}{L^{2\alpha}}\frac{\partial^{2}U(x,s)}{\partial x^{2}}
	\end{equation}
	From boundary condition \eqref{eq:generalBC1} it follows that
	\begin{equation}
		\frac{d^{2}U(x,s)}{{dx}^{2}} - (Ls)^{2\alpha}U(x,s) = 0
	\end{equation}
	This is an ordinary differential equation and the solution has the form:

	\begin{equation}
		U(x,s) = C_{1}(s)\exp\left((Ls)^{\alpha}x\right) + C_{2}(s)\exp\left(-(Ls)^{\alpha}x\right)
	\end{equation}
	The boundary conditions \eqref{eq:generalBC2} and \eqref{eq:generalBC3} give rise to the following solution
	\begin{equation}\label{eq:generalizedSolution}
		U(x,s) = F(s)\exp\left(-(Ls)^{\alpha}x\right)
	\end{equation}

	This generalization of the heat diffusion equation in \eqref{eq:generalPDE}, with the well-known expression for $\alpha=0.5$, generates a solution \eqref{eq:generalizedSolution} that contains the term $\exp{\left(-(Ls)^{\alpha}\right)}$. This concludes the premise that it is possible to construct the fractional exponential term with an adapted heat diffusion PDE, which is extended with fractional calculus.

\subsection{FDD: frequency domain}
As mentioned in \cite{Juchem2019}, the FDD can be rewritten to obtain the frequency response as:
\begin{equation}
\exp{\left(-(Ls)^{\alpha}\right)}\lvert_{s=j\omega} =  \exp\left(-(L\omega)^{\alpha}\cos\left(\frac{\alpha\pi}{2}\right)\right) \cdot \left[\cos\left((L\omega)^{\alpha}\sin\left(\frac{\alpha\pi}{2}\right)\right) - j\sin\left((L\omega)^{\alpha}\sin\left(\frac{\alpha\pi}{2}\right)\right)\right]
\end{equation}
where the modulus and phase are:
\begin{align}
	M &= \exp\left(-(L\omega)^{\alpha}\cos\left(\frac{\alpha\pi}{2}\right)\right)\label{eq:modulus}\\
	\phi &= -(L\omega)^{\alpha}\sin\left(\frac{\alpha\pi}{2}\right)\label{eq:phase}
\end{align}

As shown in equations \eqref{eq:modulus} and \eqref{eq:phase}, varying $L$ influences both modulus and phase. This coupling between modulus and phase is an interesting feature of the FDD. For completeness sake, there is no coupling between the modulus and phase for any parameter for $\alpha=1$, i.e. integer order delay, as the modulus and phase simplify to $M_{\alpha=1} = 1$ and $\phi_{\alpha=1} = -L\omega$. The integer order exponential is a special case in which the dead time $L$ does not appear in the equation of the modulus.

\subsection{FDD: time domain - theoretical derivation of the impulse response}
The FDD system's transfer function is relatively easy to analyze in frequency domain with graphical representations like Bode plots, Nyquist plots, etc. However, in time domain the analysis is not straightforward. No direct inverse Laplace transform $\Laplace^{-1}[G(s)](t)$ for the transfer function exists and the technique of partial fraction decomposition is not an option. To find the impulse response of the FDD, its transfer function is expanded according to Taylor series:
\begin{align}
	\Laplace^{-1}\left[\exp\left(-(Ls)^{\alpha}\right)\right](t) &= \Laplace^{-1}\left[\sum_{i=0}^{\infty}\frac{(-(Ls)^{\alpha})^i}{i!}\right](t)\nonumber\\
	&= \sum_{i=0}^{\infty} \frac{(-L^{\alpha})^{i}}{i!}\Laplace^{-1}\left[s^{\alpha i}\right](t)\nonumber\\
	& = \sum_{i=0}^{\infty} \frac{(-L^{\alpha})^{i}}{i!}\frac{t^{-\alpha i-1}}{\Gamma(-\alpha i)}\label{eq:TaylorExp}
\end{align}
This last step is obtained using the inverse Laplace transforms in \cite{Monje2010} where the Caputo definition of a fractional derivative is used (see \eqref{eq:CaputoDerivative}). The inverse Laplace of a fractional integrator for $\alpha \in \mathbb{R}$ is given by:
\begin{equation}
\Laplace^{-1}\left[\frac{1}{s^{\alpha}}\right](t) = \frac{t^{\alpha-1}}{\Gamma(\alpha)}
\end{equation}
Notice that the sum in \eqref{eq:TaylorExp} converges for all $\alpha > 0$. For the special case of $\alpha=0.5$ a closed-form expression of the sum can be found:
\begin{equation}\label{eq:impulseSqrt}
\Laplace^{-1}\left[\exp\left(-L\sqrt{s}\right)\right](t) = \frac{L}{2\sqrt{\pi}t^{3/2}}\exp\left(\frac{-L^2}{4t}\right)
\end{equation}
For other $\alpha$ the authors were not able to find a closed-form expression.

The range of the fractional exponent $\alpha$ has been under discussion in literature varying between $]0,1[$ and $]0,2[$ \cite{Monje2010}. The ongoing discussion handles the range of the fractional component for traditional poles and zeros in transfer functions. In the definition of the FDD, the fractional exponent $\alpha$ is found in the argument of the exponential function. In Lemma \ref{prop:boundsAlpha}, it is shown that $\alpha$ is restricted to $]0,1[$ for the FDD.

\begin{lem}\label{prop:boundsAlpha}
	Given $\alpha \in \mathbb{R}$, $L\in \mathbb{R}_{0}^{+}$ and $t > 0$, then the impulse response of $\exp\left(-(Ls)^{\alpha}\right)$ exists if and only if $\alpha \in ]0,1[$.
\end{lem}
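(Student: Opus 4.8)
My plan is to treat the series \eqref{eq:TaylorExp} as the unique candidate for the impulse response and, for each range of $\alpha$, decide both whether it converges to a bona fide function and whether that function really equals $\Laplace^{-1}\!\left[\exp\left(-(Ls)^{\alpha}\right)\right]$. For the sufficiency direction ($\alpha\in\ ]0,1[$) I would first rewrite the coefficients of \eqref{eq:TaylorExp} using Euler's reflection formula $\Gamma(z)\Gamma(1-z)=\pi/\sin(\pi z)$ evaluated at $z=-\alpha i$, which replaces $1/\Gamma(-\alpha i)$ by $-\sin(\pi\alpha i)\,\Gamma(1+\alpha i)/\pi$; the modulus of the $i$-th term is then at most $(\pi t)^{-1}(L/t)^{\alpha i}\,\Gamma(1+\alpha i)/i!$. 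A root test closes this case: by Stirling's formula $\bigl(\Gamma(1+\alpha i)/\Gamma(1+i)\bigr)^{1/i}$ is asymptotic to $\alpha^{\alpha}e^{1-\alpha}i^{\alpha-1}$, which tends to $0$ precisely because $\alpha-1<0$, so the $i$-th root of the $i$-th term tends to $0<1$ and \eqref{eq:TaylorExp} converges absolutely for every $t>0$, defining a function $h$. One then has to confirm that this $h$ is the impulse response, i.e.\ that the term-by-term inversion producing \eqref{eq:TaylorExp} is legitimate; I would justify this on the Bromwich contour using the absolute convergence just obtained together with the rule $\Laplace^{-1}[s^{\beta}](t)=t^{-\beta-1}/\Gamma(-\beta)$ from \cite{Monje2010}, cross-checking against the closed form \eqref{eq:impulseSqrt} at $\alpha=\tfrac12$. (A shortcut avoiding the series altogether: for $\alpha\in\ ]0,1[$, $\exp\left(-(Ls)^{\alpha}\right)$ is the Laplace transform of a rescaled one-sided $\alpha$-stable probability density, which is a genuine integrable function.)

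For the necessity direction I would argue directly on the transfer function. If $\alpha\le 0$, then along the positive real axis $(Ls)^{\alpha}\to 0$ as $s\to+\infty$ (and is identically $1$ when $\alpha=0$), so $\exp\left(-(Ls)^{\alpha}\right)$ does not tend to $0$; since the Laplace transform of any locally integrable, exponentially bounded function vanishes as $\mathrm{Re}(s)\to+\infty$, no such impulse response can exist — equivalently, in the expansion the $i=0$ term contributes a Dirac mass $\delta(t)$ that the remaining, purely function-valued terms cannot cancel. If $\alpha>1$, I would pick $\theta$ slightly larger than $\pi/(2\alpha)$, so that $\theta\in\ ]0,\pi/2[$ while $\alpha\theta\in\ ]\pi/2,\pi[$ and hence $\cos(\alpha\theta)<0$; along the ray $s=re^{i\theta}$ one gets $\lvert\exp\left(-(Ls)^{\alpha}\right)\rvert=\exp\bigl((Lr)^{\alpha}\lvert\cos(\alpha\theta)\rvert\bigr)\to\infty$, so $\exp\left(-(Ls)^{\alpha}\right)$ is unbounded in every right half-plane, whereas the Laplace transform of a locally integrable exponentially bounded function is bounded on $\{\mathrm{Re}(s)\ge c\}$ for every large $c$ — a contradiction. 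The borderline case $\alpha=1$ is the classical pure delay: $\exp(-Ls)=\Laplace[\delta(t-L)]$, whose inverse transform is a translated Dirac distribution, not a function (consistently, \eqref{eq:TaylorExp} then collapses to the zero function, which is certainly not the impulse response of a delay).

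The step I expect to be the real obstacle is the last part of the sufficiency argument: passing from ``the series \eqref{eq:TaylorExp} converges to some function $h$'' to ``$h$ is the impulse response of $\exp\left(-(Ls)^{\alpha}\right)$''. The powers $s^{\alpha i}$ are not Laplace-invertible in the elementary sense, and their formal inverses $t^{-\alpha i-1}/\Gamma(-\alpha i)$ fail to be integrable near $t=0$, so the interchange of $\Laplace^{-1}$ with the infinite sum genuinely needs to be earned — either via uniform convergence on the Bromwich line or via the identification with an $\alpha$-stable density. Everything else — the reflection-formula rewriting, the Stirling asymptotics, and the half-plane growth estimates — is routine once the framework is in place.
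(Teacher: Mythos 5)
Your proposal is correct, and in places more careful than the paper, but it follows a genuinely different route. The paper's own proof handles sufficiency by recognizing the series \eqref{eq:TaylorExp} as a Wright function, $-W_{-\alpha,0}(-L^{\alpha}t^{-\alpha})/t$, and invoking the known fact that $W_{\lambda,\mu}$ is entire for $\lambda>-1$, which with $\lambda=-\alpha$ yields $\alpha<1$; the case $\alpha<0$ is dismissed by asserting that the series diverges for $t\geq 1$. You instead prove convergence by hand (reflection formula, Stirling asymptotics, root test), which amounts to re-deriving the Wright-function entirety in this special case, and you settle necessity by complex-analytic properties of Laplace transforms: non-vanishing of $\exp\left(-(Ls)^{\alpha}\right)$ as $\mathrm{Re}(s)\to+\infty$ for $\alpha\leq 0$, unboundedness along a ray in every right half-plane for $\alpha>1$, and the distributional (Dirac) character at $\alpha=1$. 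The paper's route is shorter because it can cite Wright-function facts, but yours is stronger on the ``only if'' side: the failure of the condition $\lambda>-1$ only shows that this particular series representation breaks down, not that no impulse response exists, and the paper's claim of divergence for $\alpha<0$ is actually doubtful (there the series is a Wright function with $\lambda=-\alpha>0$ and converges; the true obstruction is the one you give, namely that the transform does not decay at infinity and the $i=0$ term is a Dirac mass). Finally, the gap you flag --- justifying the term-by-term inversion, i.e.\ that the sum of \eqref{eq:TaylorExp} really is $\Laplace^{-1}\left[\exp\left(-(Ls)^{\alpha}\right)\right]$ --- is equally present in the paper, which takes \eqref{eq:TaylorExp} for granted; your observation that $\exp\left(-(Ls)^{\alpha}\right)$ is the Laplace transform of a rescaled one-sided $\alpha$-stable density is the cleanest way to close it, and it is consistent with the check against \eqref{eq:impulseSqrt} at $\alpha=1/2$.
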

\begin{proof} 
The proof is delivered in two steps:
\begin{enumerate}
\item For $\alpha < 0$, the Taylor expansion from \eqref{eq:TaylorExp} does not converge (for $t \geq 1$).\\
\item For $\alpha > 0$, the sum converges for all $t$. Given the definition of the Wright function \cite{Gorenflo1999}:
	\begin{equation}
	W_{\lambda, \mu}(z) := \sum_{n=0}^{\infty}\frac{z^{n}}{n!\Gamma(\lambda n+\mu)}, \quad \lambda > -1, \mu \in \mathbb{C}
	\end{equation}
which is an entire function for $z$. Rewriting \eqref{eq:TaylorExp} with the Wright function gives:
	\begin{equation}\label{eq:sumAsWright}
	\sum_{i=0}^{\infty}\frac{-L^i}{i!}\frac{t^{-\alpha i-1}}{\Gamma(-\alpha i)} = -\frac{W_{-\alpha, 0(Lt^{-\alpha})}}{t}
	\end{equation}
	In \eqref{eq:sumAsWright}, $\lambda=-\alpha$. From the condition $\lambda > -1$ follows that $\alpha < 1$. 
\end{enumerate}
\end{proof} 

\subsection{FDD: time domain - numerical implementation }\label{ssec:numSim}
In the previous section, it is shown that the impulse response of the FDD can be expressed as an infinite sum \eqref{eq:TaylorExp}. As stated before, only for $\alpha=0.5$ a closed-form expression can be found for this infinite summation \eqref{eq:impulseSqrt}. For all other values of $\alpha$, the infinite summation has to be approximated by terminating at a finite index such that the remainder is minimal. Some problems arise from a numerical implementation, which will be addressed here. The algorithm to create the numerical solution of the infinite sum is created in MATLAB$^{\tiny\textregistered}$, but could easily be translated to any other programming language.

\subsubsection{The problem of convergence and its solution} Expression~\eqref{eq:TaylorExp} contains the Gamma function $\Gamma(z)$ which is known to have singularities, so small numerical errors can lead to unexpected behavior. Therefore, Lemma~\ref{prop:FDD} is defined to end the infinite sum in a proper manner and minimize the remainder. Before Lemma~\ref{prop:FDD} is explained, two additional theorems are needed for this lemma.
\begin{theo}[Final value theorem \cite{Oppenheim1997}]\label{Theorem:FinalValue}
	If $g(t)$ is bounded on $]0,\infty[$ and $\lim\limits_{t\to\infty} g(t)=\rho$ with $\rho < \infty$, then 
	\begin{equation}
		\lim\limits_{t\to\infty}g(t)=\lim\limits_{s\to 0}sG(s)\nonumber
	\end{equation}
	with $G(s) = \Laplace[g(t)](s)$.
\end{theo}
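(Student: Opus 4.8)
The plan is to reduce the statement to showing $sG(s)\to\rho$ as $s\to 0^{+}$, since by hypothesis $\lim_{t\to\infty}g(t)=\rho$ already. First I would record that boundedness of $g$ on $]0,\infty[$, say $|g(t)|\le M$, makes $G(s)=\int_{0}^{\infty}e^{-st}g(t)\,dt$ absolutely convergent for every $s>0$, because $\int_{0}^{\infty}e^{-st}\,dt = 1/s$. The same elementary identity gives $s\int_{0}^{\infty}e^{-st}\rho\,dt=\rho$, so the assertion is equivalent to
\[
\lim_{s\to 0^{+}} s\int_{0}^{\infty}e^{-st}\bigl(g(t)-\rho\bigr)\,dt = 0 .
\]

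Next I would invoke the hypothesis $\lim_{t\to\infty}g(t)=\rho$: given $\varepsilon>0$, choose $T>0$ with $|g(t)-\rho|<\varepsilon$ for all $t>T$, and split the integral at $T$. On $[0,T]$ the integrand is bounded in absolute value by $M+|\rho|$, so that part contributes at most $s\,(M+|\rho|)\,T$, which tends to $0$ as $s\to 0^{+}$. On $[T,\infty[$ I would bound $|g(t)-\rho|$ by $\varepsilon$ and use $s\int_{T}^{\infty}e^{-st}\,dt\le s\int_{0}^{\infty}e^{-st}\,dt=1$, giving a contribution at most $\varepsilon$. Hence $\limsup_{s\to 0^{+}}\bigl|sG(s)-\rho\bigr|\le\varepsilon$ for arbitrary $\varepsilon>0$, forcing $\lim_{s\to 0^{+}}sG(s)=\rho=\lim_{t\to\infty}g(t)$, which is exactly the claim.

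I do not expect a genuine obstacle, as this is the classical $\varepsilon$-argument; the step requiring the most care is the joint handling of the two tails, where the finite piece is killed by the prefactor $s$ multiplying a fixed constant, while the infinite piece is controlled by $\varepsilon$ being a \emph{uniform} bound there, and both estimates close only because of the normalization $s\cdot(1/s)=1$. I would also state explicitly that $s\to 0$ is taken through positive reals (the version used subsequently), so that every exponential decays and all integrals converge under the boundedness hypothesis alone; no analyticity, contour, or Abelian--Tauberian machinery is needed.
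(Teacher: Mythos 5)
Your argument is correct and complete. Note, however, that the paper itself offers no proof of this theorem: it is stated as a quoted result with a citation to the literature (Oppenheim et al.), and is then only \emph{used} in the proof of Lemma 2.2, so there is no in-paper argument to compare yours against. What you give is the standard Abelian-type proof: reduce to showing $s\int_{0}^{\infty}e^{-st}\bigl(g(t)-\rho\bigr)\,dt\to 0$, split at a threshold $T$ chosen from the hypothesis $g(t)\to\rho$, kill the finite piece with the prefactor $s$ times the bound $M+|\rho|$ (this is the one place the global boundedness hypothesis is genuinely used, to control $g$ on $]0,T]$), and control the tail by the uniform bound $\varepsilon$ together with $s\int_{T}^{\infty}e^{-st}\,dt\le 1$. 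Your explicit remark that the limit $s\to 0$ is taken through positive reals is also the reading the paper needs when it evaluates $\lim_{s\to 0}sQ(s,\alpha)$ in Lemma 2.2, so the statement you prove is exactly the version invoked there. No gap; if anything, your self-contained elementary derivation is stronger than what the paper provides, which is a bare citation.
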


\begin{theo}[Cauchy criterion \cite{Abbott2001}]\label{Theorem:Cauchy}
	A series $\sum_{i=1}^{\infty}a_{i}$ converges
	\begin{equation}
		\Longleftrightarrow \forall \epsilon > 0: \exists N \in \mathbb{N} \Rightarrow |a_{n+1} + a_{n+2} + \dots + a_{n+p}| < \epsilon,\qquad n > N, p \geq 1 \nonumber
	\end{equation}
\end{theo}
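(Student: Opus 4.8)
The plan is to reduce this to the Cauchy criterion for sequences — that is, to completeness of $\mathbb{R}$. First I would pass to the sequence of partial sums $s_n := \sum_{i=1}^{n} a_i$, so that by definition $\sum_{i=1}^{\infty} a_i$ converges if and only if $(s_n)_{n\in\mathbb{N}}$ converges. The bridge between the two formulations is the telescoping identity $s_{n+p} - s_n = a_{n+1} + a_{n+2} + \dots + a_{n+p}$, valid for all $n \in \mathbb{N}$ and all $p \geq 1$, which rewrites every tail sum appearing in the statement as a difference of two partial sums.

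For the implication ($\Rightarrow$), I would assume the series converges, say $s_n \to S$, fix $\epsilon > 0$, and pick $N$ so that $|s_m - S| < \epsilon/2$ whenever $m > N$; then for $n > N$ and $p \geq 1$ the triangle inequality yields $|a_{n+1}+\dots+a_{n+p}| = |s_{n+p} - s_n| \le |s_{n+p}-S| + |S - s_n| < \epsilon$, which is the asserted condition. For the implication ($\Leftarrow$), the hypothesis is precisely the statement that $(s_n)$ is a Cauchy sequence of real numbers; invoking completeness of $\mathbb{R}$, the sequence converges, and therefore so does the series.

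The only step with genuine mathematical content — and the one I would single out as the crux — is this appeal to completeness: the forward direction is pure bookkeeping with the triangle inequality, but the converse fails over an incomplete ordered field, so it is exactly the least-upper-bound property of $\mathbb{R}$ (equivalently, convergence of real Cauchy sequences) that does the work. Since the paper attributes the result to \cite{Abbott2001}, I would keep the write-up to the partial-sum reduction plus a one-line citation of the sequential Cauchy criterion rather than reproving completeness, noting only that in the sequel the criterion is applied to $a_i = \frac{(-L^{\alpha})^{i}}{i!}\frac{t^{-\alpha i-1}}{\Gamma(-\alpha i)}$ to control the remainder of \eqref{eq:TaylorExp}.
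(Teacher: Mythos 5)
Your proof is correct and is the standard argument (reduction to partial sums via $s_{n+p}-s_n=a_{n+1}+\dots+a_{n+p}$, triangle inequality for the forward direction, completeness of $\mathbb{R}$ for the converse); the paper itself gives no proof, merely citing \cite{Abbott2001}, where exactly this argument appears, so there is nothing to flag.
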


\begin{lem}\label{prop:FDD}
	Given $t>0$, $\alpha \in \mathbb{R}_{]0, 1[}$, $L\in \mathbb{R}_{0}^{+}$ and $f(t)$ is the impulse response of $F(s) = \exp\left(-(Ls)^{\alpha}\right)$, then $A(\alpha)\equiv 1$, with $A=\int_{0}^{\infty}f(t)dt$.
\end{lem}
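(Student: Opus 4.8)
The plan is to avoid integrating the series \eqref{eq:TaylorExp} term by term --- each summand behaves like $t^{-\alpha i-1}$ and fails to be integrable on $]0,\infty[$ --- and instead to recover $A$ from the behaviour of $F(s)=\exp\left(-(Ls)^{\alpha}\right)$ near $s=0$. First I would introduce the running integral
\begin{equation}
	g(t):=\int_{0}^{t}f(\tau)\,d\tau ,
\end{equation}
so that, by the integration rule of the Laplace transform, $G(s):=\Laplace[g(t)](s)=F(s)/s$. Since $A=\lim_{t\to\infty}g(t)$, it suffices to evaluate this limit, and the Final Value Theorem (Theorem~\ref{Theorem:FinalValue}) is the natural tool --- provided its hypotheses can be checked.

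The second step is precisely that verification. By Lemma~\ref{prop:boundsAlpha} the impulse response $f$ exists for every $t>0$ because $\alpha\in ]0,1[$, and by \eqref{eq:sumAsWright} it is representable through the entire Wright function as $f(t)=-W_{-\alpha,0}(-Lt^{-\alpha})/t$; for $\alpha=1/2$ this is the manifestly nonnegative density \eqref{eq:impulseSqrt}. From this representation I would argue that $f\geq 0$ on $]0,\infty[$ and that the partial sums of the series for $g(t)$ satisfy the Cauchy criterion (Theorem~\ref{Theorem:Cauchy}), whence $g$ is nondecreasing and bounded and $\lim_{t\to\infty}g(t)=\rho$ exists with $\rho<\infty$. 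This regularity claim is the delicate point: it needs the small- and large-argument asymptotics of the Wright function, and it is here --- not in any formal manipulation --- that the upper restriction $\alpha<1$ genuinely enters.

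Granting the hypotheses, the Final Value Theorem then gives
\begin{equation}
	A=\lim_{t\to\infty}g(t)=\lim_{s\to 0}sG(s)=\lim_{s\to 0}s\cdot\frac{F(s)}{s}=\lim_{s\to 0}\exp\left(-(Ls)^{\alpha}\right).
\end{equation}
Because $\alpha>0$ and $L\in\mathbb{R}_{0}^{+}$, we have $(Ls)^{\alpha}\to 0$ as $s\to 0$, so $A=\exp(0)=1$, an expression containing neither $\alpha$ nor $L$; this is exactly the assertion $A(\alpha)\equiv 1$. To summarise, the main obstacle is not this closing evaluation, which is immediate, but the intermediate fact that $g$ is bounded with a finite limit, since that is what licenses invoking the Final Value Theorem rather than performing a purely formal limit swap.
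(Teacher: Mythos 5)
Your proposal follows essentially the same route as the paper: both reduce $A$ to the Final Value Theorem applied to $Q(s)=\exp\left(-(Ls)^{\alpha}\right)/s$ and conclude $A=\lim_{s\to 0}\exp\left(-(Ls)^{\alpha}\right)=1$, independent of $\alpha$. The only difference is in verifying the theorem's hypotheses, where you work with the running integral $g(t)$ (via nonnegativity of $f$ and boundedness of $g$) while the paper argues boundedness and decay of $f$ itself; this is a matter of detail, not of approach.
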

\begin{proof} 
The surface underneath the impulse response is given by:
\begin{align}
	A(\alpha) &= \int_{0}^{\infty}\Laplace^{-1}[\exp\left(-(Ls)^{\alpha}\right)](t)dt\nonumber\\
	& = \lim\limits_{t\to\infty}\Laplace^{-1}[Q(s, \alpha)](t)\nonumber
\end{align}
with $Q(s, \alpha) = \frac{1}{s}\cdot\exp\left(-(Ls)^{\alpha}\right)$. 

Due to the final value theorem (Theorem \ref{Theorem:FinalValue})
\begin{equation}\label{eq:FinalValueTheorem}
	A(\alpha) = \lim\limits_{s\to 0} sQ(s,\alpha)
\end{equation}
if $\Laplace^{-1}[Q(s, \alpha)]$ is bounded and has a finite limit, or if $f(t)$ is bounded and has a finite limit.
If $t>0$, $\alpha \in \mathbb{R}_{]0, 1[}$ and $L\in \mathbb{R}_{0}^{+}$, then the sum $f(t)$ converges according Lemma \ref{prop:boundsAlpha}. Due to the Cauchy convergence criterion (Theorem \ref{Theorem:Cauchy}) $f(t)$ is bounded $\forall t > 0$.
From \eqref{eq:TaylorExp}, $\lim\limits_{t\to \infty}f(t) = 0$ if $\alpha \geq \frac{-1}{i}$ with $i\in ]0,\infty[$, which is always true, because $\alpha \in ]0, 1[$.

Therefore, due to \eqref{eq:FinalValueTheorem}:
\begin{equation}
	A(\alpha) = 1 \nonumber
\end{equation}
which is independent of $\alpha$.
\end{proof} 

Lemma~\ref{prop:FDD} will be used to stop the infinite summation by finding the numerical threshold for which the remainder is minimized. It is paramount to find the amount of terms $N$ for which the end result has been approximated with a certain tolerance. If the sum of the $N$ first terms is close to the theoretical surface of 1, terms with higher index are not superfluous.
\begin{lem}\label{prop:energy}
	Given $t>0$, $\alpha \in \mathbb{R}_{]0, 1[}$, $L\in \mathbb{R}_{0}^{+}$ and $f(t)$ is the impulse response of $F(s) = \exp\left(-Ls^{\alpha}\right)$, then the energy of the function depends on $\alpha$.
\end{lem}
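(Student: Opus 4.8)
The plan is to evaluate the energy $E(\alpha):=\int_{0}^{\infty}f(t)^{2}\,dt$ in the frequency domain by Plancherel's theorem and then to display it as an explicit, non‑constant function of $\alpha$. The first step is to check that $f\in L^{2}(0,\infty)$, so that Plancherel is legitimate. For $\alpha\in\,]0,1[$ and $\mathrm{Re}(s)\ge 0$ one has $\mathrm{Re}(s^{\alpha})\ge 0$ (writing $s=re^{j\theta}$ with $|\theta|\le\pi/2$ gives $s^{\alpha}=r^{\alpha}e^{j\alpha\theta}$ with $|\alpha\theta|<\pi/2$), hence $|F(s)|=\exp(-L\,\mathrm{Re}(s^{\alpha}))\le 1$ on the closed right half‑plane; combined with $F(j\omega)\in L^{2}(\mathbb{R})$ (the integrand $|F(j\omega)|^{2}$ is a stretched exponential in $\omega$, hence integrable), this places $F$ in the Hardy space $H^{2}$ of the right half‑plane, so by Paley--Wiener $f=\Laplace^{-1}[F]\in L^{2}(0,\infty)$ and $\int_{0}^{\infty}f(t)^{2}\,dt=\tfrac{1}{2\pi}\int_{-\infty}^{\infty}|F(j\omega)|^{2}\,d\omega$.

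The second step is to insert the modulus and reduce to a Gamma integral. By the computation behind \eqref{eq:modulus} (specialised to $F(s)=\exp(-Ls^{\alpha})$), $|F(j\omega)|=\exp\!\big(-L|\omega|^{\alpha}\cos(\alpha\pi/2)\big)$, which is even in $\omega$, so
\[
E(\alpha)=\frac{1}{\pi}\int_{0}^{\infty}\exp\!\Big(-2L\omega^{\alpha}\cos\tfrac{\alpha\pi}{2}\Big)\,d\omega .
\]
Since $\alpha\in\,]0,1[$ forces $\cos(\alpha\pi/2)>0$, the integral converges, and the substitution $u=2L\omega^{\alpha}\cos(\alpha\pi/2)$ yields
\begin{align*}
E(\alpha)&=\frac{1}{\pi\alpha}\Big(2L\cos\tfrac{\alpha\pi}{2}\Big)^{-1/\alpha}\int_{0}^{\infty}u^{1/\alpha-1}e^{-u}\,du\\
&=\frac{\Gamma\!\big(1+\tfrac{1}{\alpha}\big)}{\pi}\Big(2L\cos\tfrac{\alpha\pi}{2}\Big)^{-1/\alpha}.
\end{align*}

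The third step is merely to observe that the right‑hand side is not constant in $\alpha$: for instance $E(\tfrac12)=\Gamma(3)/\big(\pi(2L\cos\tfrac{\pi}{4})^{2}\big)=1/(\pi L^{2})$, which can be cross‑checked by evaluating $\int_{0}^{\infty}f(t)^{2}\,dt$ directly from the closed form \eqref{eq:impulseSqrt}, whereas $E(\alpha)\to\infty$ as $\alpha\to 1^{-}$ because $\cos(\alpha\pi/2)\to 0$ (consistent with the pure‑delay limit, whose Dirac impulse response has infinite energy). Hence the energy genuinely depends on $\alpha$, in contrast with the $\alpha$‑independent area $A(\alpha)\equiv 1$ of Lemma~\ref{prop:FDD}. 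I expect the only delicate point to be the justification that Plancherel may be invoked — i.e.\ the $H^{2}$/Paley--Wiener argument for $f\in L^{2}(0,\infty)$; the change of variables and the recognition of the Gamma integral afterwards are routine. A more elementary but messier alternative would be to establish square‑integrability near $t=0$ by hand using the asymptotics of the Wright function $W_{-\alpha,0}$ that appears in \eqref{eq:sumAsWright}.
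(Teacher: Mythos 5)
Your proof is correct, and it takes a genuinely different (and stronger) route than the paper. The paper also passes to the frequency domain via Parseval, but then stops short of evaluating anything: it forms the difference $\Delta E=E_{\alpha_2}-E_{\alpha_1}$, factors the integrand in a way that is not algebraically valid (pulling $\exp(-2L)$ out of $\exp\left(-2L\omega^{\alpha}\cos\frac{\alpha\pi}{2}\right)$), and concludes from the integrand not being an odd function of $\omega$ that the difference is nonzero — an implication that does not hold in general, and which also glosses over the meaning of $\omega^{\alpha}$ for $\omega<0$. You instead compute the energy in closed form,
\begin{equation}
E(\alpha)=\frac{\Gamma\left(1+\tfrac{1}{\alpha}\right)}{\pi}\left(2L\cos\tfrac{\alpha\pi}{2}\right)^{-1/\alpha},\nonumber
\end{equation}
which manifestly varies with $\alpha$, and you corroborate it at $\alpha=\tfrac12$ against the closed-form impulse response \eqref{eq:impulseSqrt} and via the divergence as $\alpha\to 1^{-}$; this buys an explicit, checkable statement where the paper only offers a qualitative (and shaky) non-cancellation argument. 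You also supply what the paper merely asserts, namely the license to use Parseval/Plancherel: the paper invokes Lemma \ref{prop:boundsAlpha} for ``finite energy,'' though that lemma only concerns existence of the impulse response, whereas your $H^{2}$/Paley--Wiener argument addresses square-integrability directly. The one compressed step is that boundedness on the half-plane plus $L^{2}$ boundary data is not literally the definition of $H^{2}$; it is repaired immediately by your own angle estimate in uniform form, $\mathrm{Re}\left(s^{\alpha}\right)=r^{\alpha}\cos(\alpha\theta)\ge |\mathrm{Im}(s)|^{\alpha}\cos\tfrac{\alpha\pi}{2}$ for $\mathrm{Re}(s)\ge 0$, which bounds $\int_{-\infty}^{\infty}|F(x+j\omega)|^{2}d\omega$ uniformly in $x>0$ and hence gives $F\in H^{2}$ directly. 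With that small patch your argument is complete and, in fact, more rigorous than the one in the paper.
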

\begin{proof} 
	The energy of a function $f(t)$ is defined by:
	\begin{equation}
		E := \int_{-\infty}^{\infty}f^{2}(t)dt
	\end{equation}
	Also here, it is preferred to work in frequency domain as a closed-form equation exists. For this, Parceval's theorem is used, which leads to a frequency domain representation of the energy:
	\begin{equation}
		E = \frac{1}{2\pi}\int_{-\infty}^{\infty}|F(j\omega)|^2d\omega
	\end{equation}
	if $f(t)$ has finite energy. According to Lemma \ref{prop:boundsAlpha}, this is the case for $\alpha \in ]0, 1[$.
	This means that from \eqref{eq:modulus}
	\begin{equation}
	E_{\alpha} = \frac{1}{2\pi}\int_{-\infty}^{\infty}\exp\left(-2L\omega^{\alpha}\cos\left(\frac{\alpha\pi}{2}\right)\right)d\omega
	\end{equation}
	The question is whether a difference in energy is observed for different $\alpha_1, \alpha_2 \in ]0, 1[$, $\alpha_1\neq\alpha_2$:
	\begin{align}
	\Delta E &= E_{\alpha_2} - E_{\alpha_1}\nonumber\\
	&=\exp(-2L)\int_{-\infty}^{\infty}\exp\left[\omega^{\alpha_2}\cos\left(\frac{\alpha_2\pi}{2}\right)\right] - \exp\left[\omega^{\alpha_1}\cos\left(\frac{\alpha_1\pi}{2}\right)\right]d\omega \stackrel{?}{=} 0
	\end{align}
	If the integrand $g(\omega)$ is piecewise continuous and uneven ($g(-\omega) = -g(\omega)$), the difference in energy identifies with zero. However, $g(-\omega) \neq -g(\omega)$, which proves that the energy is not equal for different $\alpha$.
\end{proof}

In Lemma~\ref{prop:energy} a proof is given that $\alpha$ will affect the energy of the signal. This fact, combined with the result of Lemma~\ref{prop:FDD}, gives insight in the expected impulse responses before simulating.

Equation \eqref{eq:TaylorExp} is evaluated in the time interval $t=]0, T_{max}]$. To understand the consequence of cutting-off the summation, it is important to understand the effect of each sub-term on the error. Therefore, each sub-term of the summation \eqref{eq:TaylorExp} is analyzed.

\begin{figure*}[t]
	\centering
	\begin{subfigure}[b]{0.325\textwidth}
		\centering
		\includegraphics[width=\textwidth]{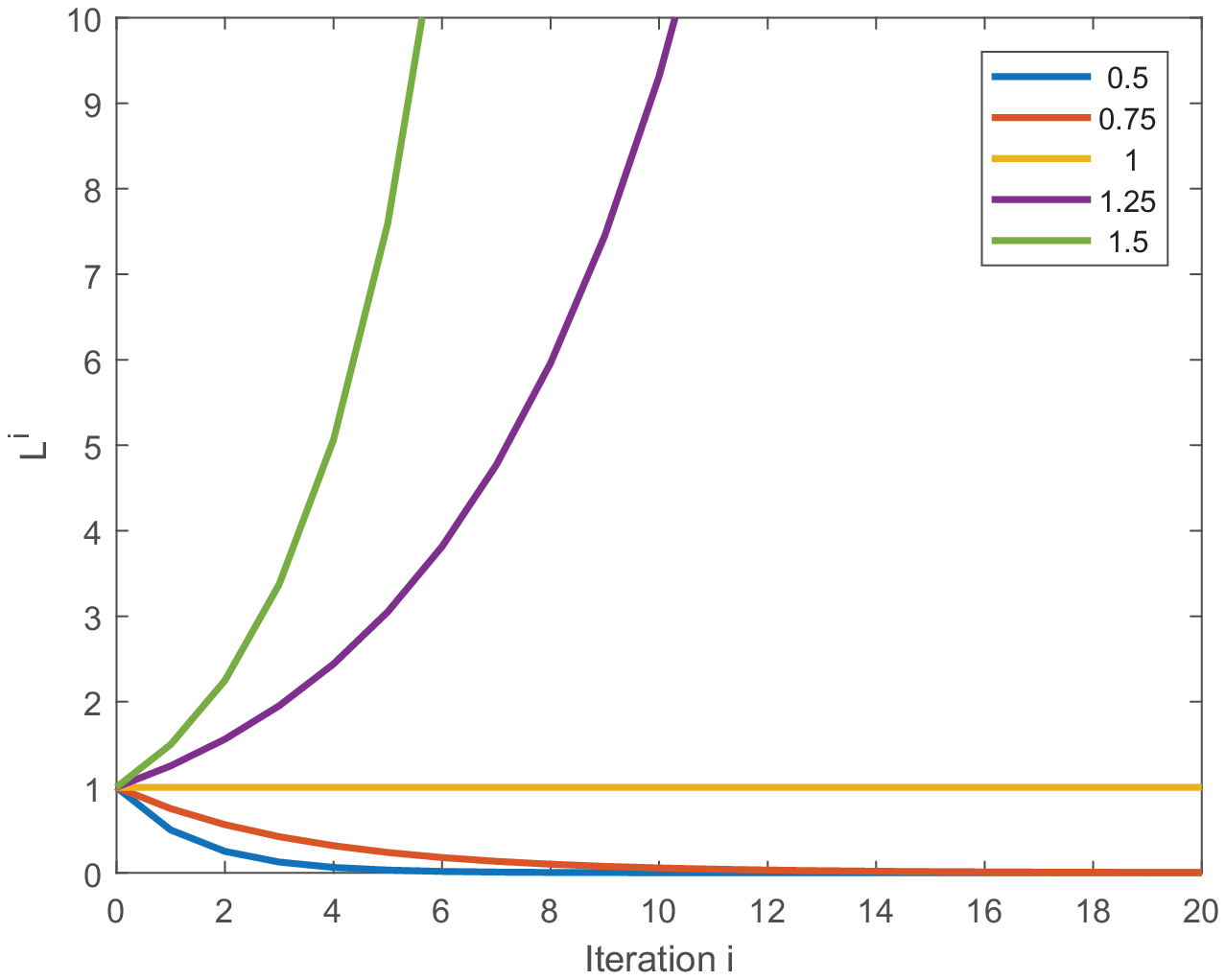}
		\caption[]%
		{}    
		\label{fig:term1}
	\end{subfigure}
	\hfill
	\begin{subfigure}[b]{0.325\textwidth}  
		\centering 
		\includegraphics[width=\textwidth]{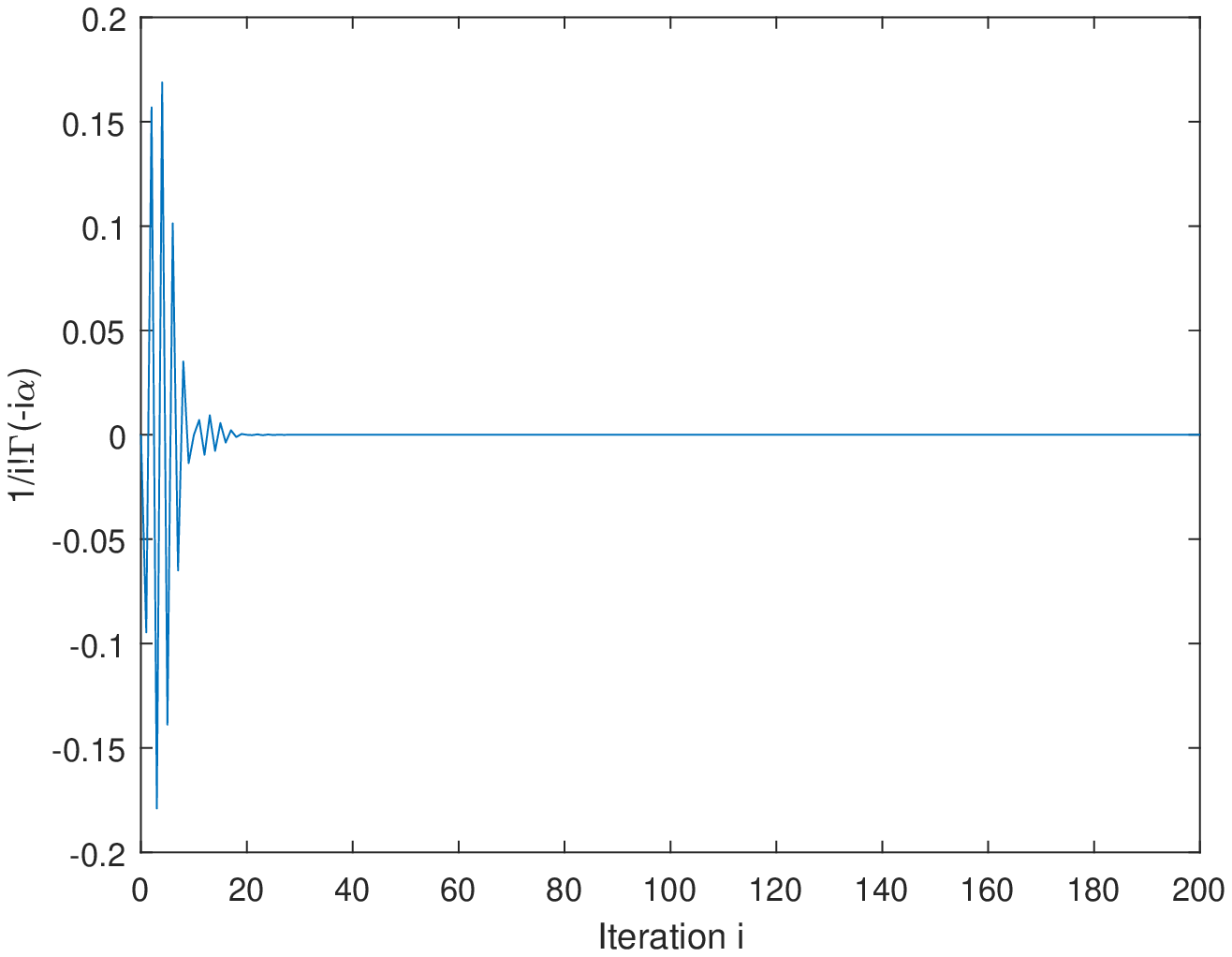}
		\caption[]%
		{}    
		\label{fig:term2}
	\end{subfigure}
	\hfill
	\begin{subfigure}[b]{0.325\textwidth}   
		\centering 
		\includegraphics[width=\textwidth]{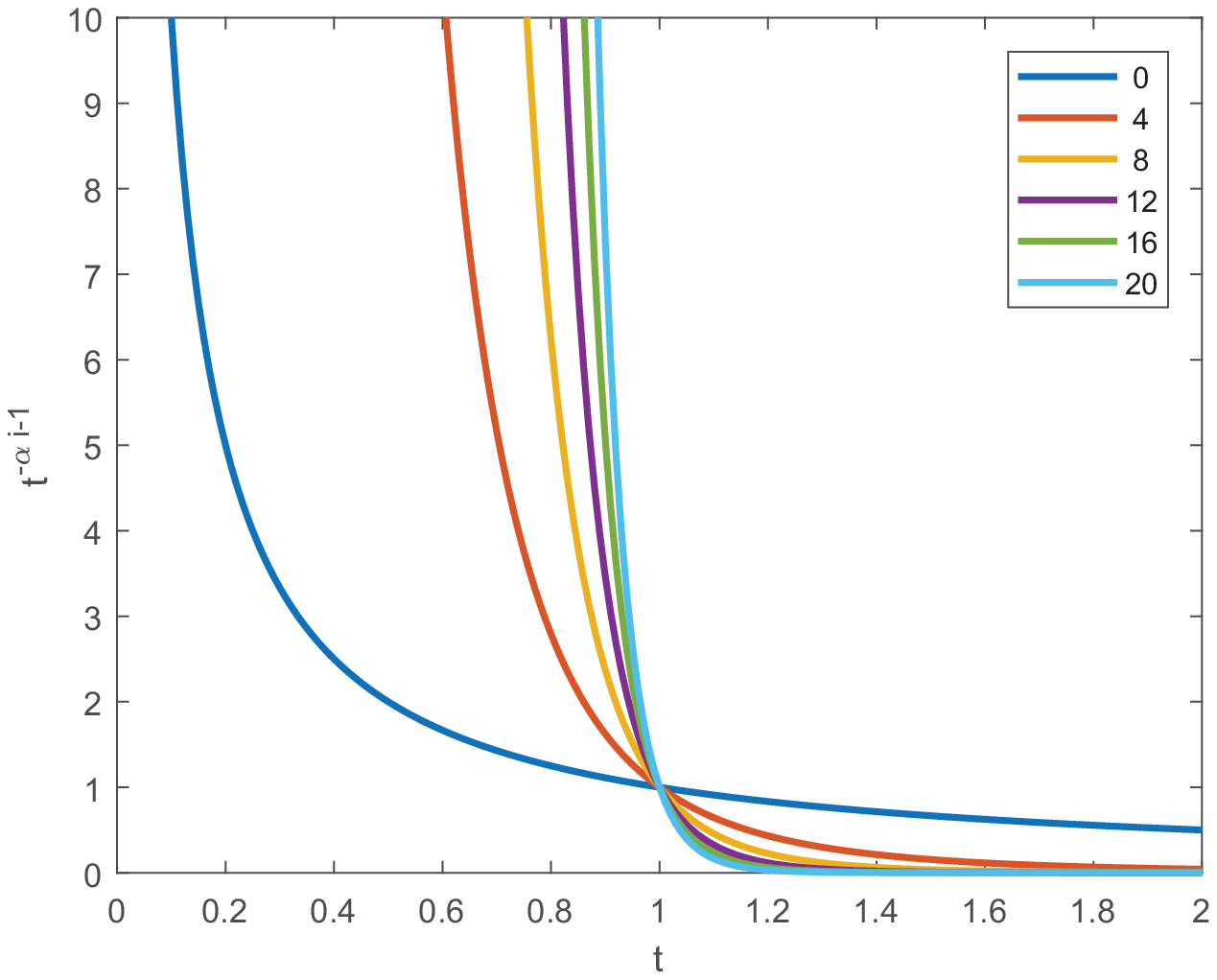}
		\caption[]%
		{}    
		\label{fig:term3}
	\end{subfigure}
	\caption{Each term of the sum of \eqref{eq:TaylorExp} consists of sub-terms. To better understand the influence of each parameter on the entire term its behavior is plotted for different values.} 
	\label{fig:summation}
\end{figure*}

\begin{itemize}
\item First the term $L^i$ is examined. For $L<1$ the term gradually decreases for increasing $i$, for $L=1$, the term is a constant 1, and for $L>1$ it gradually increases for increasing $i$ (see Figure \ref{fig:term1}). 

\item Next, the denominator contains $i!$. This term rises for increasing $i$. In MATLAB$^{\tiny\textregistered}$ this term is quickly hitting the limitations of the \textit{Double-Precision Floating Point} data type (i.e. from $170!$ onwards). MATLAB$^{\tiny\textregistered}$ converts it to an \textit{Inf}, which is detrimental to the end result.

\item The third term is an array, namely $t^{-\alpha i - 1}$. The exponent $-\alpha i - 1 < 0 $ for all $i$, which means that for  $t > 1$ the term will go to zero for increasing $i$. For $t=1$ the terms is a constant, i.e. one. For $t < 1$, the term explodes, creating numerical issues (see Figure \ref{fig:term3}).
 
\item The last term is the gamma-function $\Gamma(-\alpha i)$. MATLAB$^{\tiny\textregistered}$ features a built-in function \textit{gamma()} which enables the evaluation of the gamma-function in a certain point. The argument of the gamma-function, $-\alpha i \leq 0$ as $\alpha \in ]0, 1[$. For increasing $i$ the term will converge to zero (see Figure \ref{fig:term2}). However, if $\alpha i \rightarrow n$ with $n \in \mathbb{N}$, then $|\Gamma(-\alpha i)| \rightarrow \infty$. For increasing $i$, this singularity has a diminishing effect on its direct neighbors. This means that for $\alpha i\in\mathbb{N}$, the entire term can be assumed to equal zero. In other cases, the term is suffering from limitations regarding memory as well. Therefore, it is reasonable to use a new data type which has a larger amount of precision, namely the Variable-Precision Arithmetic (VPA) function.
\end{itemize}
In MATLAB$^{\tiny\textregistered}$, the Variable-Precision Arithmetic (VPA) function allows to increase the number of significant digits using a symbolic notation. However, the speed to execute mathematical operations decreases drastically. Consequently, this data type should only be used when necessary. The two terms in the denominator have an opposing feature when $i$ increases, as one increases in size and the other decreases. Therefore, both terms can be calculated after being transformed to VPA. This leads to a smaller number, which means both can be transformed back to \textit{double} before executing the remainder of the calculations, which makes the simulation much shorter. The entire algorithm to find the impulse response $\exp\left(-(Ls)^{\alpha}\right)$ is given in Algorithm \ref{alg:FDD}. The function requires six inputs. The first five are described before. Parameter $P$ is a parameter that improves the quality of the impulse response by discerning relevant peaks versus singularities, caused by numerical inaccuracies. The better the estimate of the peak value (which must be certainly higher than the actual peak value) , the better the quality of the impulse response. The algorithm is optimized to have the shortest calculation time. Default input parameters are $\{N, P\} = \{200, 10\}$. For increasing $L$, both parameters need to be lowered according to Lemma~\ref{prop:FDD}.

\begin{breakablealgorithm}
	\caption{Fractional Diffusive Delay impulse response simulation algorithm}
	\label{alg:FDD}
	\begin{algorithmic}[1]
		\Ensure $F(t)$ - approximation of the impulse response of FDD
		\Function{FDD}{$L$, $\alpha$, $T_{s}$, $T_{max}$, $N$, $P$}
			\State $t \leftarrow T_{s}:T_{s}:T_{max}$
			\Comment{Generate discrete time vector.}
			\For {$i \leftarrow 1, N$}
				\If {$i\alpha \in \mathbb{N}$}
				\Comment{}
					\ForAll {$k \in T(k)$}
						\State $T_{i}(k) \leftarrow 0$
						\Comment{If $i\alpha \in \mathbb{N},\ \Gamma(-i\alpha) \rightarrow \infty \Rightarrow T_{i}(k) \leftarrow 0, \forall k$.}
					\EndFor
				\Else
					\Comment{Calculate the $i^{th}$ term by evaluating the different sub-terms.}
					\State $f_{i} \leftarrow i!$
					\State $g_{ia} \leftarrow \Gamma(-i\alpha)$
					\State $L_{ia} \leftarrow (-1)^{i}L^{\alpha i}$
					\State $t_{pow}(k) \leftarrow t(k)^{-\alpha i - 1}$
					\State $T_{i}(k) \leftarrow \frac{L_{ai}t_{pow}(k)}{f_{i}g_{ia}}$
					\Statex
					\If {($f_{i} ==$ Inf) $\vee\ (g_{ia} ==$ Inf) $\vee\ (L_{ia} ==$ Inf)}
					\Comment{Increase precision of the sub-terms.}
						\State $f_{i,vpa} \leftarrow \vpa(i)!$
						\State $g_{ia,vpa} \leftarrow \Gamma(-\vpa(i)\alpha)$
						\State $L_{ia,vpa} \leftarrow (-1)^{i}\vpa(L)^{\alpha i}$
						\State $C_{vpa} \leftarrow \frac{L_{ia,vpa}}{f_{i,vpa}g_{ia,vpa}}$
						\State $T_{i}(k) \leftarrow \double(C_{vpa}t_{pow}(k))$
						\Comment{The subterms are multiplied to decrease numerical errors. Afterwards, the term is converted back to \textit{double} precision.}
						\Statex
						\State $q \leftarrow \find(t_{pow} ==$ Inf$)$
						\Comment{a vector with the indexes fulfilling the statement.}
						\ForAll {$j \in q$}
							\State $T_{i}(j) \leftarrow \double(C_{vpa}\vpa(t(j))^{-\alpha i - 1})$
						\EndFor
					\EndIf
 				\EndIf
 				\State $F(k) \leftarrow F(k) + T_{i}(k)$
 				\Comment{Add $i^{th}$ term to the sum.}
 			\EndFor
			\Statex
 			\State $q_{1} \leftarrow \find(\abs(F(k)) > P)$
 			\Comment{Removing numerical errors due to cutting off the infinite sum.}
 			\If {$q_{1} \nsubseteq \emptyset$}
	 			\For {$i \rightarrow 1, q_{1}(end)$}
	 				\State $F(i) \leftarrow 0$
	 			\EndFor
	 		\EndIf
 			\State $q_{2} \leftarrow \find(F(k) < 0)$
 			\If {$q_{2} \nsubseteq \emptyset$}
 				\For {$i \rightarrow 1, q_{2}(end)$}
 					\State $F(i) \leftarrow 0$
 				\EndFor
 			\EndIf
		\EndFunction	
	\end{algorithmic}
\end{breakablealgorithm}

\subsubsection{Numerical simulations}
 In Figure \ref{fig:impulseResp}, the impulse responses are plotted for varying $\alpha$ (Figure \ref{fig:impulseRespAlpha}) and varying L (Figure \ref{fig:impulseRespL}). Notice that
\begin{equation}\label{eq:dirac}
\lim\limits_{\alpha\to 1}\Laplace^{-1}[\exp\left(-(Ls)^{\alpha}\right)](t) = \delta(t-L^{\alpha})
\end{equation}
From Figure \ref{fig:impulseResp} the choice of the FDD's name becomes clear. The fractional order exponential term leads to a mix of a delay and diffusion effect. In Figure \ref{fig:impulseRespAlpha}, the diffusive effect increases for $\alpha$ going from $0$ to $0.5$ and starts decreasing again for increasing $\alpha$ until no diffusive effect is observed for $\alpha \rightarrow 1$, which leads to the dirac response, as in \eqref{eq:dirac}. In Figure \ref{fig:impulseRespL}, increasing $L$ results in a longer delay of the response to an impulse. This can be explicitly shown with the time-scaling property of the Laplace transform:
\begin{cor}[Laplace transform's time-scaling property]\label{cor:timeScaling}
	If $x(t) \longleftrightarrow X(s)$, then
	\begin{equation}
		t \mapsto x(at), a\in\mathbb{R}, a\neq 0 \longleftrightarrow s\mapsto \frac{1}{|a|}X\bigg(\frac{s}{a}\bigg)
	\end{equation}
\end{cor}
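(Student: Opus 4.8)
The plan is to prove the time-scaling property of the Laplace transform directly from the definition. Recall that $X(s) = \Laplace[x(t)](s) = \int_0^\infty x(t) e^{-st}\,dt$, so the statement claims that the Laplace transform of $t \mapsto x(at)$ equals $s \mapsto \frac{1}{|a|} X(s/a)$ for any nonzero real $a$. First I would write down the defining integral for the transform of the scaled function, namely $\int_0^\infty x(at) e^{-st}\,dt$, and then perform the substitution $\tau = at$, so that $t = \tau/a$ and $dt = d\tau/a$. The key bookkeeping is to track how the limits of integration and the sign of $d\tau$ transform, which is exactly where the absolute value $|a|$ enters.

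The argument splits naturally into two cases according to the sign of $a$. For $a > 0$, the substitution $\tau = at$ carries the interval $[0,\infty)$ to $[0,\infty)$ with $dt = d\tau/a$, giving $\int_0^\infty x(\tau) e^{-(s/a)\tau}\,\frac{d\tau}{a} = \frac{1}{a} X(s/a)$, and since $a = |a|$ here this is the claimed formula. For $a < 0$, the substitution sends $t = 0$ to $\tau = 0$ and $t = \infty$ to $\tau = -\infty$, so after flipping the orientation of the integral one picks up a sign, and combined with the $1/a$ from $dt = d\tau/a$ one obtains $\frac{1}{-a}\int_0^\infty x(\tau) e^{-(s/a)\tau}\,d\tau = \frac{1}{|a|} X(s/a)$ again. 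In both cases the exponent is rewritten as $-st = -(s/a)\tau$, which shows that the frequency argument becomes $s/a$.

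The main subtlety — really the only one — is the careful handling of the change of variables for negative $a$, in particular getting the sign from the reversed limits to combine correctly with the factor $1/a$ to produce $1/|a|$; this is the step where a sign error would be easiest to make. There is also a minor domain consideration: one should note that this is the one-sided (unilateral) Laplace transform, so strictly speaking $x$ is taken to be supported on $[0,\infty)$ (or the identity is understood in the region of convergence), and for $a<0$ the scaled function $x(at)$ is supported on $(-\infty,0]$; the statement is really the formal operational rule as used in the sequel, where it is applied with $a > 0$ to extract the delay $L^\alpha$ from the impulse response. For the application at hand only the $a>0$ branch is needed, and there the proof is the single substitution described above.

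\begin{proof}
By definition of the (one-sided) Laplace transform,
\begin{equation}
	\Laplace[t \mapsto x(at)](s) = \int_{0}^{\infty} x(at)\, e^{-st}\, dt .
\end{equation}
Substitute $\tau = at$, so that $t = \tau/a$ and $dt = d\tau/a$, and note $-st = -(s/a)\tau$.

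If $a > 0$, the limits $t=0$ and $t=\infty$ map to $\tau = 0$ and $\tau = \infty$, hence
\begin{equation}
	\int_{0}^{\infty} x(at)\, e^{-st}\, dt = \frac{1}{a}\int_{0}^{\infty} x(\tau)\, e^{-(s/a)\tau}\, d\tau = \frac{1}{a}\, X\!\left(\frac{s}{a}\right) = \frac{1}{|a|}\, X\!\left(\frac{s}{a}\right).
\end{equation}

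If $a < 0$, the limits $t=0$ and $t=\infty$ map to $\tau = 0$ and $\tau = -\infty$; reversing the orientation of the integral and using $a = -|a|$ gives
\begin{equation}
	\frac{1}{a}\int_{0}^{-\infty} x(\tau)\, e^{-(s/a)\tau}\, d\tau = -\frac{1}{a}\int_{-\infty}^{0} x(\tau)\, e^{-(s/a)\tau}\, d\tau = \frac{1}{|a|}\, X\!\left(\frac{s}{a}\right),
\end{equation}
where in the last step $x$ is supported on $[0,\infty)$ so the integral over $(-\infty,0]$ agrees with the integral over $[0,\infty)$ in the sense of the one-sided transform. In either case,
\begin{equation}
	x(at) \longleftrightarrow \frac{1}{|a|}\, X\!\left(\frac{s}{a}\right),
\end{equation}
which is the claimed identity.
\end{proof}
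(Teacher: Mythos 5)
The paper itself offers no proof of this corollary --- it is quoted as a standard operational property and then applied with the positive scale factor $a=1/L$ --- so your substitution argument is supplying something the paper simply cites. For $a>0$ your proof is exactly the standard one and is correct: the change of variable $\tau=at$ maps $[0,\infty)$ onto itself, the Jacobian gives the factor $1/a=1/|a|$, and the exponent becomes $-(s/a)\tau$, which is all that is needed for the paper's use of the corollary (extracting the scaling $g(t)=\tfrac{1}{L}f(t/L)$ from $G(s)=F(Ls)$).

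The $a<0$ branch, however, does not work as written. If you insist on the unilateral transform and take $x$ supported on $[0,\infty)$, then for $a<0$ the integrand $x(at)$ vanishes for all $t>0$, so the left-hand side $\int_0^\infty x(at)e^{-st}\,dt$ is simply $0$; correspondingly, after your substitution the integral $\int_{-\infty}^{0}x(\tau)e^{-(s/a)\tau}\,d\tau$ is $0$ as well, and your closing claim that it ``agrees with the integral over $[0,\infty)$'' contradicts the very support assumption you invoke --- it equals $X(s/a)$ only if $X\equiv 0$. The version of the scaling rule with $1/|a|$ and arbitrary nonzero real $a$ is genuinely a statement about the \emph{bilateral} Laplace transform: there the identical substitution over $\mathbb{R}$ gives $\frac{1}{|a|}X(s/a)$ for either sign of $a$ (with the region of convergence scaled and, for $a<0$, reflected). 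So either carry out the proof in the bilateral setting, or restrict the statement to $a>0$ and drop the absolute value; as it stands, the negative-$a$ paragraph asserts an equality that is false under your own hypotheses, even though nothing in the paper's subsequent application depends on it.
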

Take $F(s) = e^{-s^{\alpha}}$ and $G(s) = F(sL) = \exp{(-(sL)^{\alpha})}$. According to Lemma \ref{cor:timeScaling}, $g(t) = \frac{1}{L}f(\frac{t}{L})$ with $g(t) \longleftrightarrow G(s)$ and $f(t) \longleftrightarrow F(s)$. This shows that the parameter $L$ is equivalent with a time-scaling and an amplitude-scaling, which is independent of $\alpha$. This again confirms the results in Figure \ref{fig:impulseRespL}.

\begin{figure}[h!]
	\centering
	\begin{subfigure}{0.47\textwidth}
		\centering
		\includegraphics[width=\linewidth]{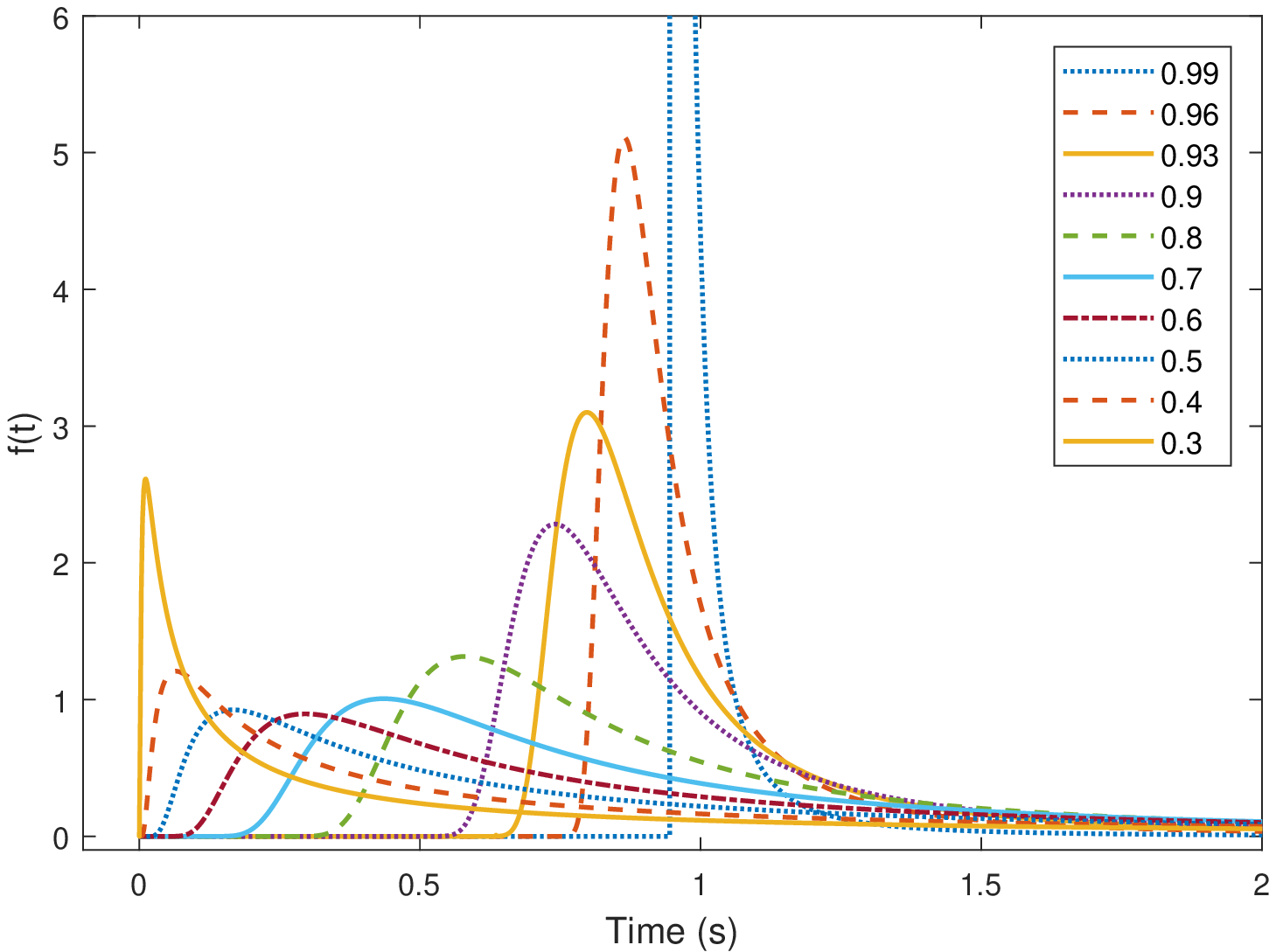}
		\caption{}
		\label{fig:impulseRespAlpha}
	\end{subfigure}
	\begin{subfigure}{0.47\textwidth}
		\centering
		\includegraphics[width=\linewidth]{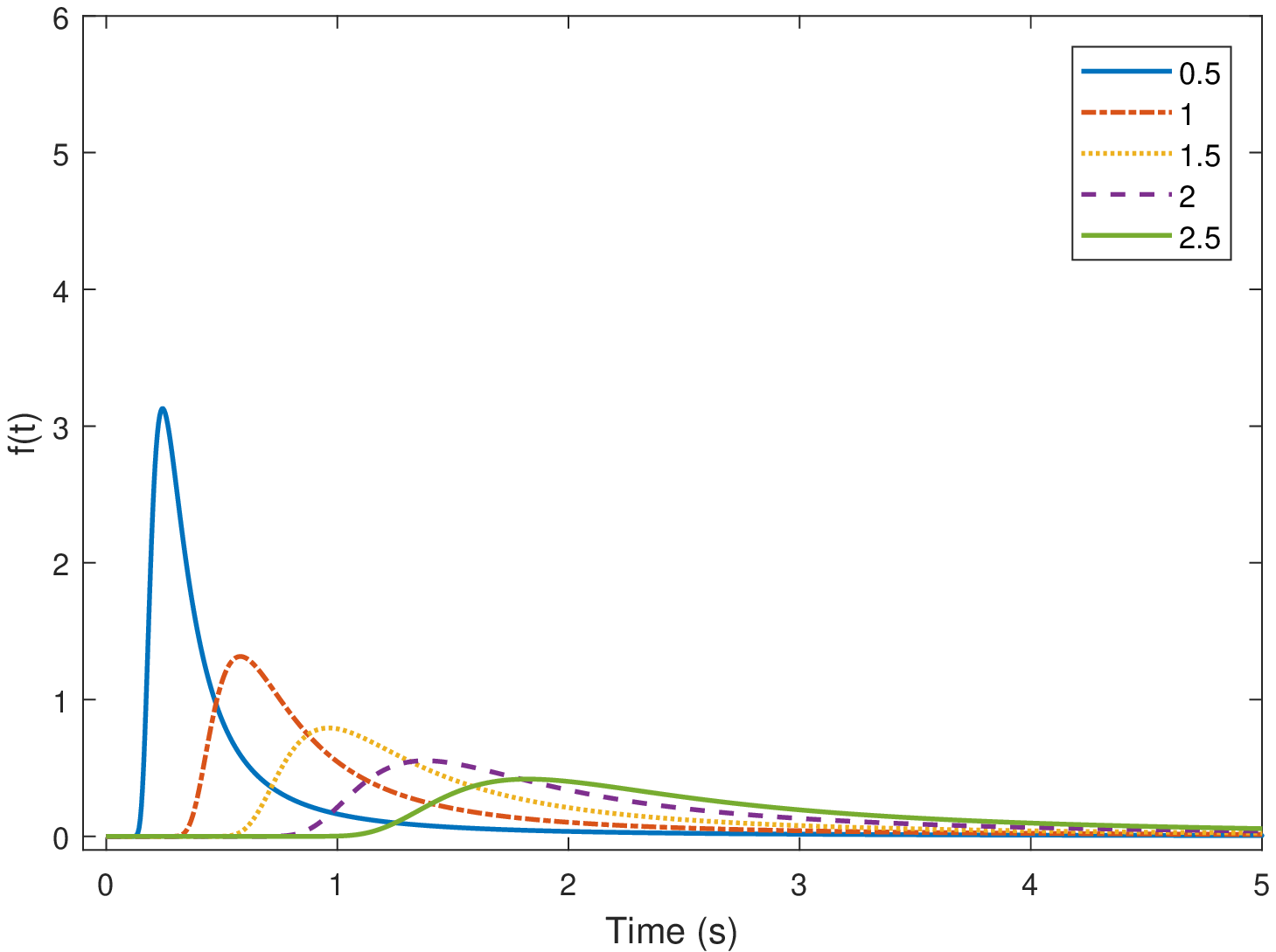}
		\caption{}
		\label{fig:impulseRespL}
	\end{subfigure}
	\caption{The impulse response for FDD for (a) varying $\alpha$ with $L=1$ and (b) varying delay $L$ with $\alpha=0.8$.}
	\label{fig:impulseResp}
\end{figure}

\section{Methodology}\label{sec:M&M}
\setcounter{section}{3}
\setcounter{equation}{0}

In this section, the authors propose a new model based on the previous observations in this paper. This new model combines the idea of a First Order Plus Dead Time approximation with the FDD resulting in a First Order Plus Fractional Diffusive Delay (FOPFDD) model, capable of identifying delay-dominant, higher-order processes.

To find the FOPFDD model of a system, a methodology is provided to find the parameters as proposed in \cite{Juchem2019}. Also, a methodology to obtain a time domain analysis is presented in this paper. The FOPFDD model is expressed as:

\begin{equation}\label{eq:FOPFDD}
F(s)=\frac{K}{\tau s+1}e^{-(Ls)^\alpha}.
\end{equation}
Notice that the choice of introducing $L^{\alpha}$ leads to a more physically intuitive interpretation of parameter $L$ as it can be expressed in units of seconds now.

\subsection{Model fitting: an optimization approach}

The model fitting method is formulated as a multi-objective optimization problem. The rationale is to find a model that minimizes the error between frequency response of the original and simplified model.

It is assumed that the frequency response function of the system that needs to be modeled, is available, either through identification techniques or mathematical modeling. The model is fitted on the system's frequency response by minimizing the error between the magnitude and phase. A Pareto front \cite{Miettinen1999} is obtained from which the trade-off between minimizing the magnitude or the phase error can be observed (see Figure \ref{fig:n6Pareto}). Then, the normalized error for both objectives is minimized, such that the optimal fit is found. This methodology is earlier described in \cite{Juchem2019}. The minimization problem leads to the model parameters $\{K, \tau, L, \alpha\}$. 

\subsection{Time domain solution}
A good fit in frequency domain is important, but many processes are evaluated in time domain. For instance, a step response gives insight into the dynamical behavior of the system in time. Therefore, the model in \eqref{eq:FOPFDD} is subdivided into two parts:
\begin{align}
	F_1(s) &= \frac{K}{\tau s+1}U(s)\nonumber\\
	F_2(s) &= \exp\left(-(Ls)^{\alpha}\right)\nonumber
\end{align}
Given that $F_1(s) = \Laplace\left[f_1(t)\right](s)$ and $F_2(s) = \Laplace\left[f_2(t)\right](s)$ and remarking that $F(s) = F_1(s)\cdot F_2(s)$, it can be stated that $f(t) = f_1(t)*f_2(t)$ with $*$ the convolution, due to duality. Signal $U(s)$ is the input of the model. In the case of a step response $U(s) = 1/s$. This means that 
\begin{equation}\label{eq:convolution}
	f(t) = \int_{-\infty}^{\infty}f_1(\kappa)\cdot f_2(t-\kappa)d\kappa
\end{equation}
The time response $f_{1}(t)$ can be easily obtained as this is the solution of a first order, integer order differential equation and $f_2(t)$ is found with Algorithm \ref{alg:FDD}. With this algorithm, the impulse response can be obtained for a given $L$ and $\alpha$. With \eqref{eq:convolution}, the step response of the FOPFDD model is obtained.

\section{Results}\label{sec:Results}
\setcounter{section}{4}
\setcounter{equation}{0}

\subsection{Bench mark: RC ladder network}

The series RC network that is mentioned earlier is a perfect example of a delay-dominant, higher-order system, which consists of a large number of subsystems. The system is depicted in Figure \ref{fig:ladderNetwork}. The input of the system is the voltage $V_{in}$ and the output is $V_{n}$.

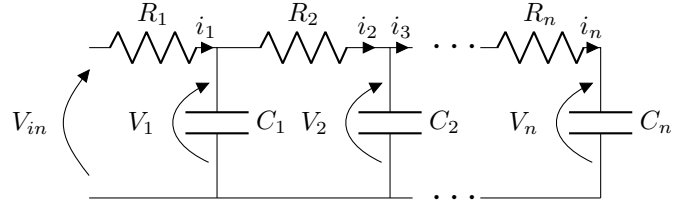
\begin{figure}[h]
	\begin{center}
		\begin{circuitikz}
			\draw (0,0)
			to[R=$R_{1}$,i=$i_1$] (1.7,0)
			to[C=$C_1$,v_<=$V_{1}$] (1.7,-2);
			\draw (1.7,-2)
			to[short] (0,-2)
			(0,-2) to [european voltages,open,v^=$V_{in}$]    (0,0);
			\draw (1.7,0)
			to[R=$R_2$,i=$i_2$] (4,0)
			to[C=$C_2$,v_<=$V_{2}$] (4,-2);
			\draw (4,-2)
			to[short] (1.7,-2);
			\draw (4,0)
			to[short,i=$i_3$] (4.3,0);
			\draw (4,-2)
			to[short,i] (4.3,-2);
   			\draw (4.6,0) --  (5.2,0)node[midway,scale=1.5,fill=white]{$\cdots$};
			\draw (4.6,-2) -- (5.2,-2)node[midway,scale=1.5,fill=white]{$\cdots$};
			\draw (5.2,0)
			to[R=$R_n$,i=$i_n$] (6.8,0)
			to[C=$C_n$,v_<=$V_{n}$] (6.8,-2);
			\draw (6.8,-2)
			to[short] (5.2,-2);
		\end{circuitikz}
		\caption{RC-circuit}
		\label{fig:ladderNetwork}
	\end{center}
\end{figure}

This system can be rewritten in state-space form based on Kirchoff's current and voltage laws:
\begin{equation}
	\begin{aligned}
		\text{KCL}&\rightarrow
		\left\{
		\begin{array}{ll}
			\begin{aligned}
				i_{p}&=i_{p+1}+C_{p}\dot{V}_{p} \quad p \in [1,n-1]\\
				i_{n}&=C_{n}\dot{V}_{n}
			\end{aligned}
		\end{array}
		\right.
		\\\text{KVL}&\rightarrow
		\left\{
		\begin{array}{ll}
			\begin{aligned}
				V_{p-1}&=R_{p}i_{p}+V_{p} \quad p \in [2,n]\\
				V_{in}&=R_{1}i_{1}+V_{1}
			\end{aligned}\\
		\end{array}
		\right.
	\end{aligned}
	\label{eq:KLs}
\end{equation}
	
Combining both yields the dynamics in the $V_{p}$'s:
	
\begin{equation}
	\left\{
	\begin{array}{ll}
		\begin{aligned}
		V_{p-1}&=R_{p}\sum_{j=p}^{n}C_{i}\dot{V}_{i}+V_{p} \quad p \in [2,n]\\
		V_{in}&=R_{1}\sum_{j=1}^{n}C_{i}\dot{V}_{i}+V_{1}
		\end{aligned}\\
	\end{array}
	\right.
	\label{eq:dynamics}
\end{equation}
	
By defining the state $\mathbf{x}=\begin{bmatrix} V_{1} & V_{2} & \dots & V_{n} \end{bmatrix}^{T}\in \mathbb{R}^{n\times1}$, input $u=V_{in}$ and output $y=V_{n}$, \eqref{eq:dynamics}  can be rewritten in the implicit state space model:

\begin{equation}
	\begin{aligned}
		E\dot{\mathbf{x}}&=A\mathbf{x}+Bu \\
		y&=C\mathbf{x} \label{eq3}
	\end{aligned}
\end{equation}
	
with $B=\begin{bmatrix} 1 & 0 & \dots & 0 \end{bmatrix}^{T} \in\mathbb{R}^{n\times 1}$, $C=\begin{bmatrix} 0 & 0 & \dots & 1 \end{bmatrix}\in\mathbb{R}^{1\times n}$ and  $E$ and $A$$ \in \mathbb{R}^{n \times n}$:
\[
	E = \begin{bmatrix} 
	R_{1}C_{1} & R_1C_{2} & R_{1}C_{3}  &\dots & R_{1}C_{n} \\
	0 & R_2C_{2} &R_{2}C_{2} &\dots & R_{2}C_{n} \\
	0 & 0 &R_{3}C_{2} &\dots & R_{3}C_{n} \\
	\vdots & \vdots &\vdots &\ddots &\vdots&\\
	0 &  0 & \dots&0 & R_{n}C_{n} 
	\end{bmatrix}
\]
	
\[
	A = \begin{bmatrix} 
	-1 & 0 & 0 & 0& \dots & 0 \\
	1 & -1 & 0 &0 &\dots & 0 \\
	0 & 1 & -1 & 0 &\dots & 0 \\
	\vdots & \vdots &\vdots &\vdots & \ddots &\vdots&\\
	0 & 0 & 0 &\dots & 1& -1 \\
	\end{bmatrix}
\]
	
$E$ is a non-singular, upper-triangular matrix because of the non-zero product of all diagonal elements. The implicit state space can be rewritten in the standard form:
	
\begin{equation}
	\begin{aligned}
		\dot{\mathbf{x}}&=\overbrace{E^{-1}A}^{\hat{A}}\mathbf{x}+\overbrace{E^{-1}B}^{\hat{B}}u \\
		y&=C\mathbf{x}
	\end{aligned}
\end{equation}

This can be rewritten as a transfer function:

\begin{equation}\label{eq:stateSpaceTF}
	G(s) = C(sI-\hat{A})^{-1}\hat{B}
\end{equation}

Due to the choice of the input ($V_{in}$) and the output ($V_{n}$), a network with $n$ series subnetworks consists of $n$ poles and no zeros. To obtain the complete model, it means that $n$ parameters need to be identified, which is highly inconvenient for large $n$. In the subsequent analysis, three models with a limited amount of parameters are fitted on the actual transfer function using the minimization of the error between the Bode plots as described in section \ref{sec:M&M}. First, the break frequency is determined from the real transfer function based on the crossing point of the magnitude plot with the $-3{dB}$ line. One decade before and after the break frequency is used to fit the models.

\subsection{modeling of higher-order processes}

The performance of the proposed model is investigated by comparing it to the state-of-the-art models found in literature.
\begin{enumerate} 
\item First Order Plus Dead Time (FOPDT)\\ $H_{1}(s) = \frac{K}{\tau s + 1}\exp(-Ls)$: the widely used standard approximation.
\item  First Order Fractional Order Plus Dead Time (FO$^2$PDT)\\ $H_{2}(s) = \frac{K}{\tau s^\alpha + 1}\exp(-Ls)$ with $\alpha\in ]0, 2[$: currently the state of the art, using a fractional pole. 
\item First Order Plus Fractional Diffusive Delay (FOPFDD)\\ $H_{3}(s) = \frac{K}{\tau s + 1}\exp(-(Ls)^{\alpha})$ with $\alpha \in ]0, 1[$: the proposed model.
\end{enumerate}
\begin{table}[h]
	\centering
	\caption{The parameters found by the optimization algorithm for the FOPDT $H_{1}(s)$, the FO$^2$PDT $H_{2}(s)$, and the FOPFDD $H_{3}(s)$ models for $n$ RC-networks in series.}
	\label{tab:OptimizedParam}
	\begin{adjustbox}{width=\textwidth}
		\begin{tabular}{llll||llll||llll}
			\hline
			\textbf{n} 	& \multicolumn{3}{c||}{$\mathbf{H_1(s)}$} & \multicolumn{4}{c||}{$\mathbf{H_2(s)}$} & \multicolumn{4}{c}{$\mathbf{H_3(s)}$} \\
			& K         & $\tau$         & L         & K     & $\tau$    & L    & $\alpha$    & K     & $\tau$    & L    & $\alpha$    \\ \hline
			4          	& 0.99      & 7.67           & 1.27      & 0.99  & 8.21      & 1.16 & 1.05        & 1.00  & 5.91      & 1.51 & 0.78        \\
			5          	& 0.99      & 11.33          & 2.00      & 0.99  & 12.32     & 1.85 & 1.05        & 1.01  & 8.46      & 2.47 & 0.77        \\
			6          	& 0.99      & 15.78          & 2.88      & 0.99  & 17.40     & 2.68 & 1.05        & 1.03  & 10.58     & 3.94 & 0.73        \\
			7          	& 0.99      & 20.93          & 3.90      & 0.99  & 23.33     & 3.64 & 1.05        & 1.03  & 13.54     & 5.54 & 0.72        \\
			8          	& 0.99      & 27.00          & 5.08      & 0.99  & 30.03     & 4.76 & 1.04        & 1.04  & 16.92     & 7.39 & 0.71        \\ 
			32         	& 0.99      & 388.61         & 77.24     & 0.99  & 423.00    & 75.50& 1.02        & 1.05  & 232.65    & 118.21& 0.69        \\
			64         	& 0.99      & 1531.10        & 304.62    & 0.99  & 2033.30   & 286.98& 1.04       & 1.06  & 846.00    & 511.13& 0.67        \\
			\hline
		\end{tabular}
	\end{adjustbox}
\end{table}

The optimization problem is prone to boundaries, namely for $H_{1}(s)$ and $H_{2}(s)$, $K$ is limited to a very narrow band around $1$ as this is the static gain of the process. For $H_{3}(s)$, $K$ is not constrained as explained in section \ref{sec:Discussion}. For $H_{2}(s)$ and $H_{3}(s)$ $\alpha$ is constrained according to their respective intervals as presented before. All other parameters are not constrained. The results of the optimization problem are given in Table \ref{tab:OptimizedParam} for several values of $n$.

\begin{figure}[!h]
	\centering
	\hspace*{-1cm}
	\begin{subfigure}[b]{0.505\textwidth}
		\centering
		\includegraphics[width=\textwidth]{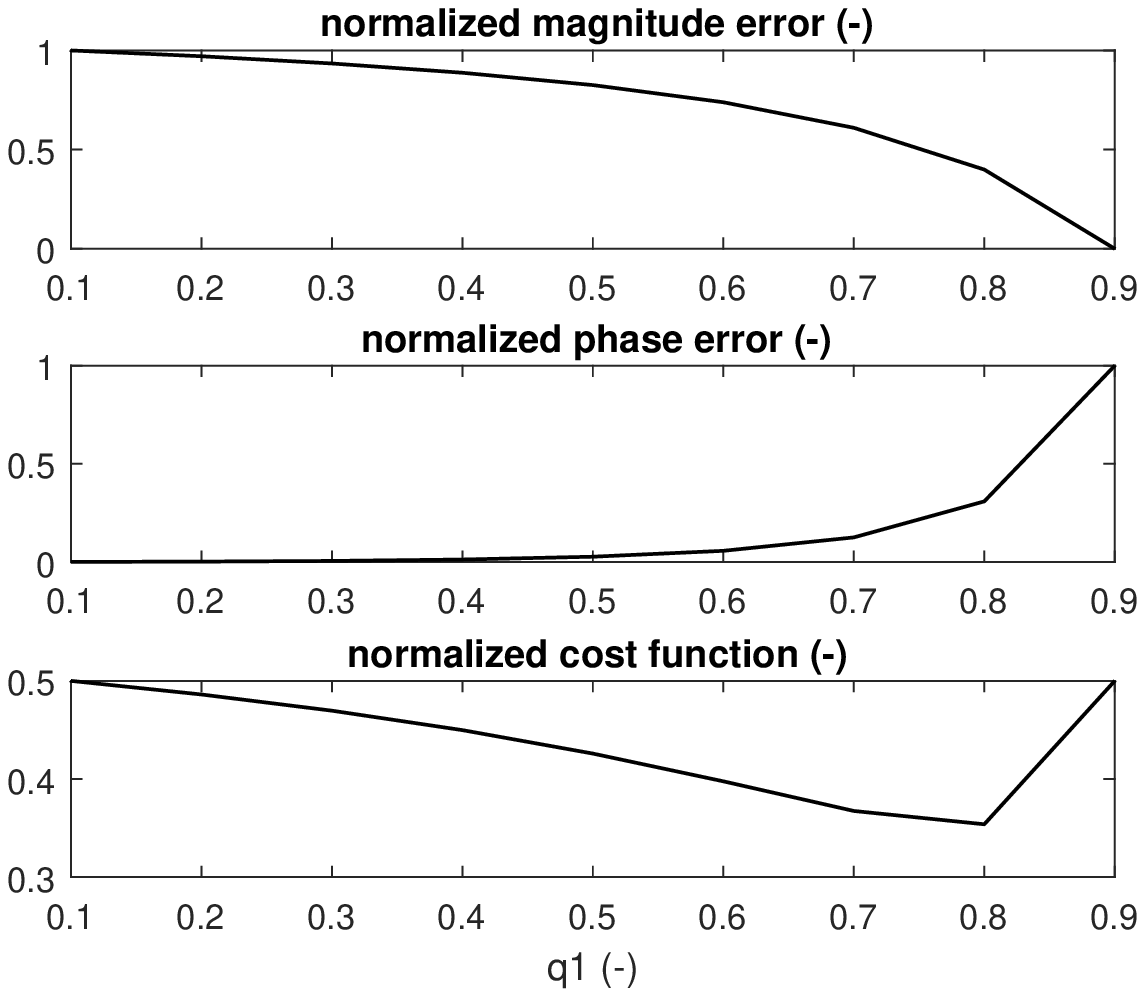}
		\caption{}
		\label{fig:n6Pareto}
	\end{subfigure}
	\hspace{0.005\textwidth}
	\begin{subfigure}[b]{0.43\textwidth}
		\centering
		\includegraphics[width=\textwidth]{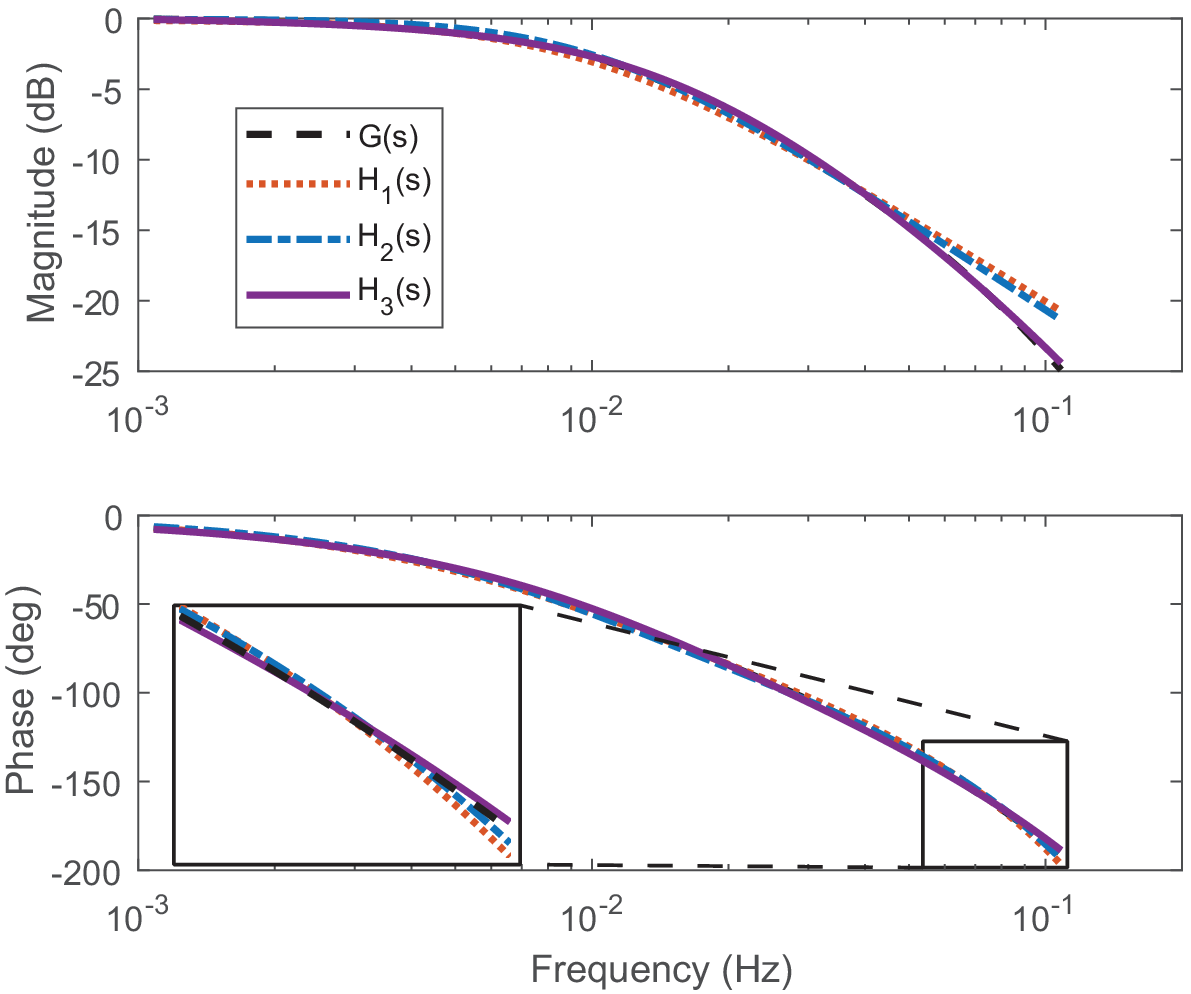}
		\caption{}
		\label{fig:n6Bode}
	\end{subfigure}
	\vskip\baselineskip
	\begin{subfigure}[b]{0.505\textwidth}
		\centering
		\includegraphics[width=\textwidth]{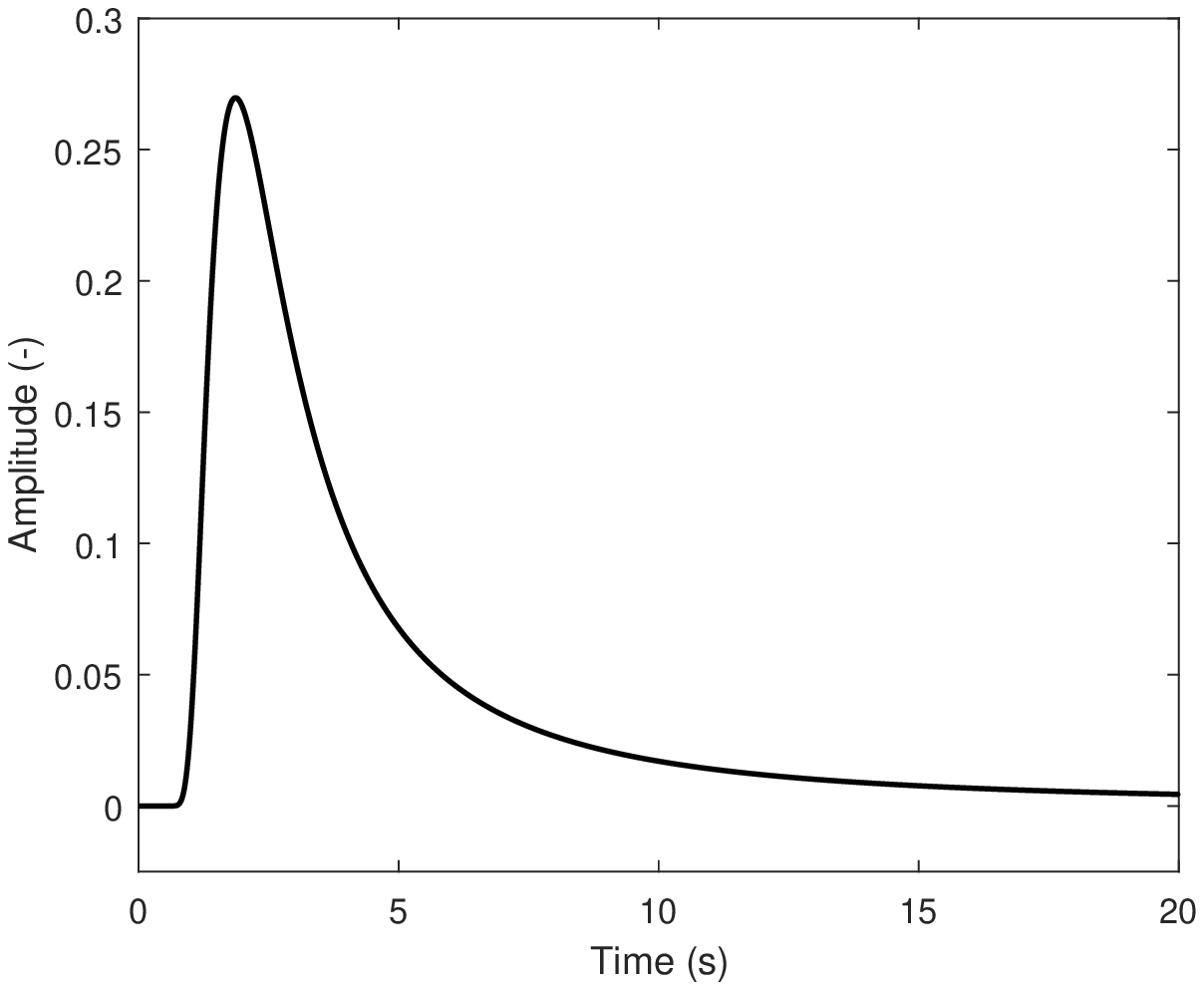}
		\caption{}
		\label{fig:n6ImpulseResp}
	\end{subfigure}
	\hfill
	\begin{subfigure}[b]{0.475\textwidth}
		\centering
		\includegraphics[width=\textwidth]{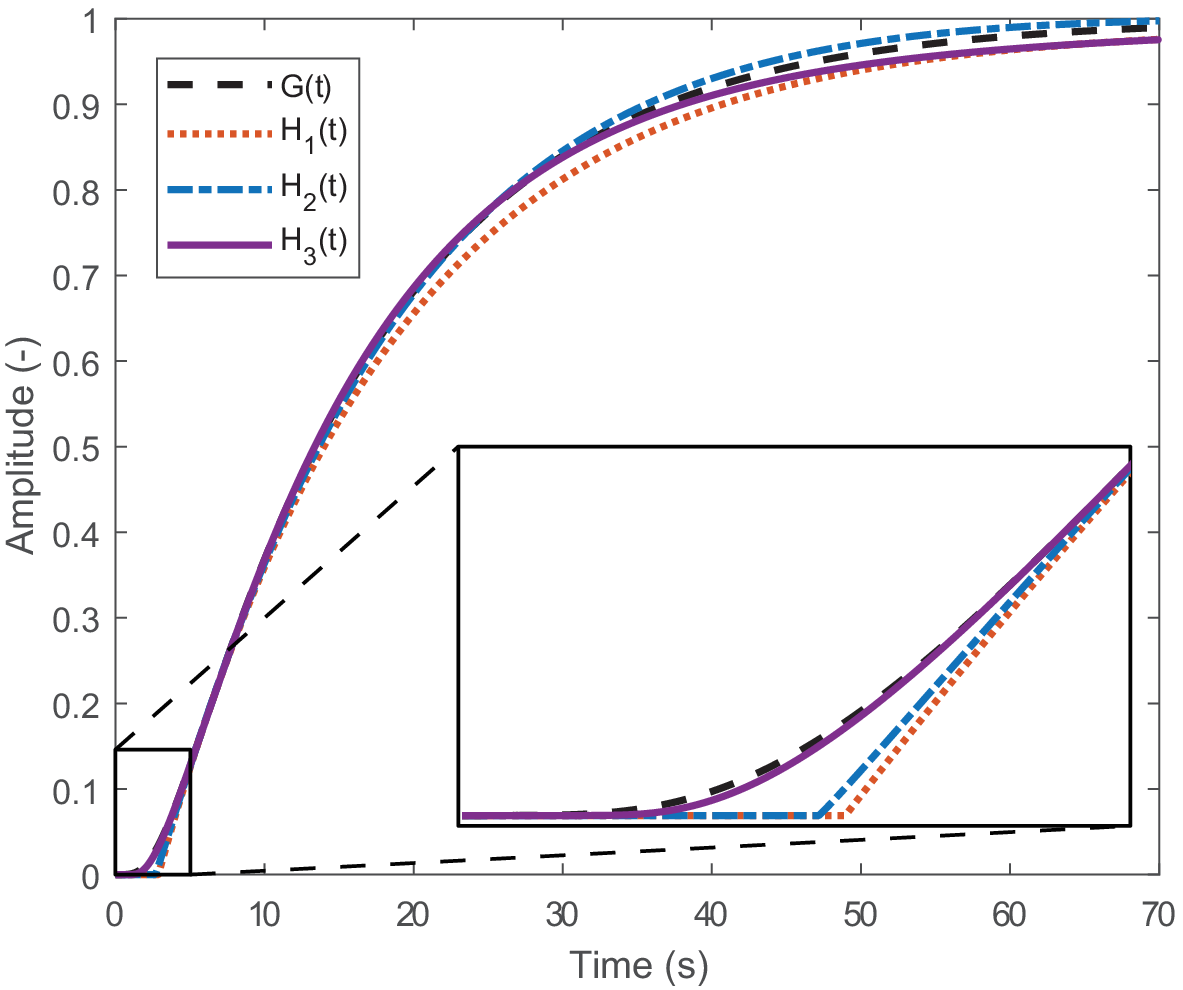}
		\caption{}
		\label{fig:n6StepResp}
	\end{subfigure}
	\caption{For $n=6$ (a) the Pareto Front from the optimization for $H_3(s)$, (b) the optimized Bode plots, (c) the impulse response of the FDD for the given $\alpha$ and $L$, and (d) the step responses are plotted.}
	\label{fig:n6}
\end{figure}

To visualize the different steps of the methodology, some plots for $n=6$ are given in Figure \ref{fig:n6}. In Figure \ref{fig:n6Pareto} the Pareto front of the optimization problem for $H_{3}(s)$ is presented to show the trade-off between minimizing the error for the magnitude plot and the phase plot respectively. The Pareto front indicates for which weight $q_{1}$ the trade-off between both objectives is optimal \cite{Juchem2019}. This leads to an optimal Bode plot for each model such that the error between the model and the real system's Bode plot is minimal for the combined objectives. These Bode plots are given in Figure \ref{fig:n6Bode}. The optimal Bode plot leads to an optimal set of parameters for each model. To evaluate the FOPFDD in time domain the FDD's impulse response is needed. Based on parameters $\{L, \alpha, t\}$ the impulse response is obtained (see Figure \ref{fig:n6ImpulseResp}) using Algorithm \ref{alg:FDD}. The step is found by calculating the convolution of the step response of the first order part of the model with the given impulse response (see \eqref{eq:convolution}). The step responses of $H_{1}(s)$ and $H_{2}(s)$ can be found using MATLAB$^{\tiny\textregistered}$ with the FOMCON toolbox \cite{Tepljakov2011}. The step responses are given in Figure \ref{fig:n6StepResp}.

To evaluate the performance of these three models a cumulative squared error between the real system and the model's step responses is calculated. To evaluate the model quality the error is represented by $J_{\eta}$ for $\eta\in [30\%, 63\%, 90\%]$. Here, $J_{\eta}$ is given by
\begin{equation}
	J_{\eta} = \frac{1}{N}\mathbf{[Y(t) - H(t)]\cdot[Y(t) - H(t)]^{T}}
\end{equation}
for $t\in [0, t_{\eta}]$ with $\eta = Y(t_{\eta})$. The vectors $\mathbf{Y(t)}, \mathbf{H(t)}\in \mathbb{R}^{1\times N}$ are the time response vectors of the real process and a model's step response respectively. In Table \ref{tab:ModelErrors} the modeling errors are given.

\begin{table}[h]
	\centering
	\caption{The cumulative squared error averaged over the samples $J_{\eta}$ for $t\in [0, t_{\eta}]$ with $\eta = G(t_{\eta})$ the percentage of the end value.}
	\label{tab:ModelErrors}
	\begin{adjustbox}{width=\textwidth}
		\begin{tabular}{llll||lll||lll}
			\hline
			\textbf{n} 	& \multicolumn{3}{c||}{$\mathbf{H_1(s)}$} & \multicolumn{3}{c||}{$\mathbf{H_2(s)}$} & \multicolumn{3}{c}{$\mathbf{H_3(s)}$} \\
						&$J_{30\%}$ & $J_{63\%}$	 & $J_{90\%}$&$J_{30\%}$ &$J_{63\%}$ 	& $J_{90\%}$&$J_{30\%}$ &$J_{63\%}$ &$J_{90\%}$  \\ \hline
			4          	& 1.41e-4   & 2.37e-4        & 5.39e-4   & 8.88e-5   & 5.67e-5 	 	& 4.83e-5 	& 1.52e-5  	& 3.22e-5   & 8.07e-5    \\
			5          	& 1.36e-4   & 2.14e-4        & 4.77e-4   & 8.90e-5   & 5.84e-5 	 	& 4.29e-5   & 9.05e-6  	& 1.16e-5   & 2.77e-5    \\
			6          	& 1.27e-4   & 2.10e-4        & 4.71e-4   & 8.25e-5   & 5.48e-5	 	& 4.15e-5   & 1.83e-6  	& 3.88e-6   & 8.41e-6    \\
			7          	& 1.23e-4   & 2.01e-4        & 4.49e-4   & 8.05e-5   & 5.37e-5 	 	& 3.93e-5   & 7.57e-7  	& 1.11e-5   & 1.62e-5    \\
			8          	& 1.24e-4   & 2.37e-4        & 5.17e-4   & 8.01e-5   & 5.30e-5 	 	& 3.62e-5   & 6.74e-7  	& 1.95e-5   & 2.71e-5    \\ 
			32         	& 1.09e-4   & 1.73e-4        & 3.87e-4   & 9.27e-5   & 6.82e-5	 	& 7.20e-5   & 3.22e-6  	& 5.39e-5   & 7.44e-5    \\
			64         	& 1.08e-4   & 1.74e-4        & 3.91e-4   & 6.96e-5   & 4.66e-5	 	& 3.22e-5   & 6.15e-6  	& 8.26e-4   & 1.05e-4    \\
			\hline
		\end{tabular}
	\end{adjustbox}
\end{table}

\section{Discussion}\label{sec:Discussion}

\setcounter{section}{5}
\setcounter{equation}{0}

\subsection{Physical meaning of the FDD}
In this work a generalized diffusion equation \eqref{eq:generalPDE} is proposed which gives rise to solution \eqref{eq:generalizedSolution} including the newly defined term the Fractional Diffusive Delay (FDD) or $\exp{\left(-(Ls)^{\alpha}\right)}$. In Figure \ref{fig:impulseRespAlpha}, the impulse responses of the FDD term with $L=1$ are given for different $\alpha$ based on a numerical simulation algorithm. From this figure, a clear agreement with existing knowledge is found. In the case of integer order $\alpha=1$ the dirac response is found. As $\alpha$ approaches 0.5, a diffusion effect can be perceived where the dirac peak is spread out in time. An effect on the delay aspect of the impulse response can also be seen in Figure \ref{fig:impulseRespAlpha}. Depending on $\alpha$, the time before the system responds significantly, varies. Also, the value of $L$ has an effect on this delay effect as can be observed in Figure \ref{fig:impulseRespL}. In conclusion, the FDD term is a function of $\alpha$ and $L$, which allows to balance between a diffusive and a delay effect.

Many processes include such a combination of diffusion and delay. Especially a series of interconnected systems give rise to this phenomenon due to the large number of poles. Some examples include transmission lines, communicating water tanks, etc. Nowadays, literature resorts to an artificial construct to represent the time delay by naively shifting the entire response in time. This leads to a discontinuity in the response, which is artificial for a physical process. However, diffusion has shown to be the missing link to explain the response's gradual build-up in time. By combining these two concepts, delay by shifting the response in time on the one hand and diffusion with its continuous and gradual response on the other hand, a time response that corresponds to a real physical response can be obtained. 

\subsection{Model fitting}
The method to minimize the error of the model's frequency response is an effective method to find the optimal set of parameters. The cost function consists of two objectives, while there are three or four model parameters. A trade-off exists and, therefore, a Pareto front is used to find the optimal weights between both objectives. In Table \ref{tab:OptimizedParam} some preliminary trends are observed. Firstly, for all models $L$ is expressed in seconds and increases as the dead time of the real system increases. This makes sense in models $H_{1}(s)$ and $H_{2}(s)$. Secondly, $\tau$ increases as well for all models with increasing $n$. For the first two models there is a trade-off between $\tau$ and $L$ to obtain a good fit on both the magnitude and the phase. In $H_{3}(s)$ this trade-off is managed mainly by $\alpha$ and partially by parameter $L$, due to the nonlinearity introduced by $\alpha$ which creates a coupling between phase and magnitude. Thirdly, for $H_{3}(s)$ parameter $K$ is given more freedom as there is a clear trade-off between modeling the low versus the high frequency behavior for this model. Nevertheless, the parameter stays close to the actual static gain, which is 1. Finally, parameter $\alpha$ is used in model $H_{2}(s)$ to have a better fit in the middle frequency range (see Table \ref{tab:ModelErrors}). But notice that only a small deviation of the integer case is needed to improve the fit. The parameters of model $H_{1}(s)$ and $H_{2}(s)$ are similar. This is also clear from Figure \ref{fig:n6Bode} for the case $n=6$. For the other cases of $n$ similar conclusions are drawn. Parameter $\alpha$ for model $H_{3}(s)$ deviates more from the integer case. There is also a clear tendency that $\alpha$ decreases for increasing $n$.

Table \ref{tab:ModelErrors} gives a cumulative error for different partitions of the step response. Three time sections are defined: A) the time needed to reach 30\% of the final value, B) the time needed to reach 63\% of the final value and C) the time needed to reach 90\% of the final value. It can be seen that the novelty of the newly presented FOPFDD model lies within partition A where the third model clearly outperforms the first and the second model. This can be understood from the detail in Figure \ref{fig:n6StepResp}. As compared to the artificial construction to model the delay in $H_{1}(s)$ and $H_{2}(s)$, which leads to a discontinuity in the derivative of the step response, the proposed model gives a gradual fit without the need to increase the number of model parameters.

Finally, the new model is able to reduce a high-order model to a model with only four parameters. It does not surprise the reader that this model introduces an improvement compared to the model $H_{1}(s)$, which only uses three parameters. However, an improvement is observed when comparing model $H_{2}(s)$ with the new model, while the number of parameters is the same. Also, a numerical solution is given to have a simple transition from frequency to time domain.

\subsection{Increasing number of subsystems}
The theory states that for an infinite number of subsystems, $\alpha$ has to be 0.5 as proven in \cite{Sierociuk2015}. However, in reality an infinite number of subsystems cannot be obtained, unless a theoretical construct such as a lumped-parameter model is used. In this paper, the hypothesis is postulated whether a finite number of subsystems, which is practically much more relevant, will lead to an $\alpha \neq 0.5$, however, no theoretical proof is provided. In Table \ref{tab:OptimizedParam}, the number of subsystems is increased to show the effect on the parameters. For the FOPFDD model, the optimization algorithm clearly shows a trend of $\alpha$ going towards $0.5$. Furthermore, as expected, the delay becomes larger with increasing $L$.

\section{Conclusion}
\setcounter{section}{6}
\setcounter{equation}{0}

This work presents the novel First Order Plus Fractional Diffusive Delay (FOPFDD) model which includes the innovative term, Fractional Diffusive Delay (FDD), expressed as $\exp{\left(-(Ls)^{\alpha}\right)}$. The novelty lies in the fact that the argument of the exponential function contains a fractional derivative $s^{\alpha}$ with $\alpha\in\mathbb{R}, \alpha\in]0,1[$. The FDD is analyzed in frequency domain and a full discussion of the inverse Laplace transform of FDD is also provided, which is innovative in the field of fractional calculus. The new FOPFDD model is tested on high-order RC circuits to indicate the advantages of this new model fitting compared to state-of-art models including both integer order and fractional order variants. This work elaborates the theoretical background of the time response of the FDD and an algorithm is developed to overcome numerical difficulties. These time domain calculations of the FDD expose a link with delay and diffusion and are unique as this has never been done before. The discussion of the results shows the added value of the FDD term in shaping time responses. The new model outperforms the state-of-art models, especially at the onset of the system's response. This property makes the FOPFDD model ideal to model delay dominant systems with an increased accuracy.

However, the authors are aware that this is only a first step in the direction of fully understanding the behavior of the Fractional Diffusive Delay. An important step is to reveil more properties of this term, such that the numerical approximation of the impulse response can be improved, especially with regard to time efficiency. 
With regard to solidifying the link between the FDD and the generalized heat diffusion equation, a proof needs to be established that if the number of discrete subsystems connected in series goes to infinity ($n\to\infty)$, means that $\alpha$ converges to 0.5. From the modeling perspective, a proper methodology to estimate the model parameters based on a measured time response would improve the model accuracy. A more profound link between the model and parameters and the behavior in time domain is crucial.

\section*{Acknowledgments}
None

\smallskip

\bibliographystyle{IEEEtran}
\bibliography{mybibfile}

\begin{thebibliography}{10}
\providecommand{\url}[1]{#1}
\csname url@samestyle\endcsname
\providecommand{\newblock}{\relax}
\providecommand{\bibinfo}[2]{#2}
\providecommand{\BIBentrySTDinterwordspacing}{\spaceskip=0pt\relax}
\providecommand{\BIBentryALTinterwordstretchfactor}{4}
\providecommand{\BIBentryALTinterwordspacing}{\spaceskip=\fontdimen2\font plus
\BIBentryALTinterwordstretchfactor\fontdimen3\font minus
  \fontdimen4\font\relax}
\providecommand{\BIBforeignlanguage}[2]{{%
\expandafter\ifx\csname l@#1\endcsname\relax
\typeout{** WARNING: IEEEtran.bst: No hyphenation pattern has been}%
\typeout{** loaded for the language `#1'. Using the pattern for}%
\typeout{** the default language instead.}%
\else
\language=\csname l@#1\endcsname
\fi
#2}}
\providecommand{\BIBdecl}{\relax}
\BIBdecl

\bibitem{Swaroop1996}
D.~Swaroop and J.~K. Hendrick, ``String stability of interconnected systems,''
  \emph{IEEE Transactions on Automatic Control}, vol.~41, no.~3, pp. 349--357,
  1996.

\bibitem{Lu2019}
K.~Lu, W.~Zhou, G.~Zeng, and Y.~Zheng, ``Constrained population extremal
  optimization-based robust load frequency control of multi-area interconnected
  power system,'' \emph{Electrical Power and Energy Systems}, vol. 105, pp.
  249--271, 2019.

\bibitem{Sudaresan}
K.~Sudaresan and P.~Krishnaswamy, ``Estimation of time delay time constant
  parameters in time, frequency, and laplace domains.'' \emph{The Canadian
  Journal of Chemical Engineering}, vol.~56, no.~2, pp. 257--262, 1978.

\bibitem{Chang}
R.~Chang, S.~Shen, and C.~Yu, ``Derivations of transfer functions from relay
  feedback systems.'' \emph{Industrial \& Engineeing Chemistry Research},
  vol.~31, pp. 855--860, 1992.

\bibitem{Balaguer}
P.~Balaguer, V.~Alfaro, and O.~Arrieta, ``Second order inverse response process
  identification from transient step response.'' \emph{ISA Transactions},
  vol.~50, no.~2, pp. 231--238, 2011.

\bibitem{Magin}
R.~L. Magin, \emph{Fractional Calculus in Bioengineering}.\hskip 1em plus 0.5em
  minus 0.4em\relax Connecticut: Begell House Publishers, 2006.

\bibitem{Monje2010}
C.~A. Monje, Y.~Chen, B.~M. Vinagre, D.~Xue, and V.~Feliu-Batlle,
  \emph{Fractional-Order Systems and Controls - Fundamentals and
  Applications}.\hskip 1em plus 0.5em minus 0.4em\relax Springer-Verlag London,
  2010.

\bibitem{Caponetto}
R.~Caponetto, G.~Dongola, L.~Fortuna, and I.~Petras, \emph{Fractional Order
  Systems: Modeling and Control Applications}.\hskip 1em plus 0.5em minus
  0.4em\relax Singapore: World Scientific, 2010.

\bibitem{Sierociuk}
D.~Sierociuk, I.~Podlubny, and I.~Petras, ``Experimental evidence of
  variable-order behavior of ladders and nested ladders.'' \emph{IEEE Trans. on
  Control Systems Technology}, vol.~21, no.~2, pp. 459--466, 2013.

\bibitem{Podlubny1999}
I.~Podlubny, \emph{Fractional Differential Equations : an Introduction to
  Fractional Derivatives, Fractional Differential Equations, to Methods of
  Their Solution and Some of Their Applications}.\hskip 1em plus 0.5em minus
  0.4em\relax San Diego: Academic Press, 1999.

\bibitem{Oldham}
K.~B. Oldham and J.~Spanier, \emph{The Fractional Calculus}.\hskip 1em plus
  0.5em minus 0.4em\relax NewYork: Academic Press, 1974.

\bibitem{Narang2011}
A.~Narang, S.~L. Shah, and T.~Chen, ``Continuous-time model identification of
  fractional-order models with time delays.'' \emph{IET Control Theory \&
  Applications}, vol.~5, no.~7, pp. 900--912, 2011.

\bibitem{Wang}
Q.~Wang, C.~Hang, and B.~Zou, ``Low order modeling from relay feedback.''
  \emph{Industrial \& Engineeing Chemistry Research}, vol.~36, pp. 375--381,
  1997.

\bibitem{Kaya}
I.~Kaya and D.~Atherton, ``Parameter estimation from relay autotuning with
  asymmetric limit cycle data.'' \emph{Journal of Process Control}, vol.~11,
  pp. 429--439, 2001.

\bibitem{Srinivasan}
K.~Srinivasan and M.~Chidambaram, ``Modified relay feedback method for improved
  system identification.'' \emph{Computers \& Chamical Engineering}, vol.~27,
  pp. 727--732, 2003.

\bibitem{Sierociuk2015}
D.~Sierociuk, T.~Skovranek, M.~Macias, I.~Podlubny, I.~Petras, A.~Dzielinski,
  and P.~Ziubinski, ``Diffusion process modeling by using fractional-order
  models.'' \emph{Applied Mathematics and Computation}, vol. 257, pp. 2--11,
  2015.

\bibitem{Zou}
C.~Zou, L.~Zhang, X.~Hu, Z.~Wang, T.~Wik, and M.~Pecht, ``A review of
  fractional-order techniques applied to lithium-ion batteries, lead-acid
  batteries, and supercapacitors.'' \emph{Journal of Power Sources}, vol. 390,
  pp. 286--296, 2018.

\bibitem{Kanchev}
H.~Kanchev, D.~Lu, F.~Colas, V.~Lazarov, and B.~Francois, ``Energy management
  and operational planning of a microgrid with a pv-based active generator for
  smart grid applications.'' \emph{IEEE Transactions on Industrial
  Electronics}, vol.~58, no.~10, pp. 4583--4592, 2011.

\bibitem{Ocampo}
C.~Ocampo-Martinez, \emph{Model Predictive Control of Wastewater
  Systems}.\hskip 1em plus 0.5em minus 0.4em\relax New, York:Springer-Verlag,
  2010.

\bibitem{Juchem2019}
J.~Juchem, K.~Dekemele, A.~Chevalier, M.~Loccufier, and C.-M. Ionescu, ``First
  order plus frequency dependent delay modeling: new perspective or
  mathematical curiosity?'' in \emph{Proceedings of IEEE International
  Conference on Systems, Man, and Cybernetics (SMC 2019)}, Bari, Italy, 2019.

\bibitem{Muresan2020}
C.~I. Muresan and C.-M. Ionescu, ``Generalization of the {FOPDT} model for
  identification and control purposes,'' \emph{Processes}, vol.~8, no.~6, pp.
  682 -- 699, 2020.

\bibitem{Curtain2009}
R.~Curtain and K.~Morris, ``Transfer functions of distributed parameter
  systems: A tutorial,'' \emph{Automatica}, vol.~45, no.~5, pp. 1101--1116,
  2009.

\bibitem{Gorenflo1999}
R.~Gorenflo, Y.~Luchko, and F.~Mainardi, ``Analytical properties and
  applications of the wright function.'' \emph{Fractional Calculus and Applied
  Analysis}, vol.~2, no.~4, pp. 383--414, 1999.

\bibitem{Mainardi2010}
F.~Mainardi, A.~Mura, and G.~Pagnini, ``The {M}-{Wright} function in
  time-fractional diffusion processes: A tutorial survey,'' \emph{International
  Journal of Differential Equations}, vol. 104505, 2010.

\bibitem{Oppenheim1997}
A.~V. Oppenheim, A.~S. Willsky, and S.~H. Nawab, \emph{Signals \&
  Systems}.\hskip 1em plus 0.5em minus 0.4em\relax New Jersey: Prentice-Hall
  International, 1997.

\bibitem{Abbott2001}
S.~Abbott, \emph{Understanding Analysis}.\hskip 1em plus 0.5em minus
  0.4em\relax New York: Springer-Verlag, 2001.

\bibitem{Miettinen1999}
K.~Miettinen, \emph{Nonlinear Multiobjective Optimization}.\hskip 1em plus
  0.5em minus 0.4em\relax New York: Springer Science \& Business Media, 1999,
  vol.~12.

\bibitem{Tepljakov2011}
A.~Tepljakov, E.~Petlenkov, and J.~Belikov, ``Fomcon: a matlab toolbox for
  fractional-order system identification and control.'' \emph{International
  Journal of Microelectronics and Computer Science}, vol.~2, no.~2, pp. 51--62,
  2011.

\end{thebibliography}


\bigskip \smallskip

\it

\noindent
\hfill Received: February 2, 2021 \\[12pt]

\end{document}